\newtheorem{theorem}{Theorem}
\newtheorem{lemma}{Lemma}
\newtheorem{definition}{Definition}
\newtheorem{assumption}{Assumption}
\newtheorem{remark}{Remark}
\newtheorem{corollary}{Corollary}
\def\BibTeX{{\rm B\kern-.05em{\sc i\kern-.025em b}\kern-.08em
    T\kern-.1667em\lower.7ex\hbox{E}\kern-.125emX}}
\let\NAT@parse\undefined
\begin{document}
\title{Frequency Domain Stability and Convergence Analysis for General Reset Control Systems Architecture}

\author{S. Ali Hosseini, and S. Hassan HosseinNia, \IEEEmembership{Senior Member, IEEE}
\thanks{This work was supported by
ASMPT, 6641 TL Beuningen, The Netherlands.}
\thanks{S. Ali Hosseini and S. Hassan HosseinNia are with the Department of Precision and Microsystems
Engineering, Delft University of Technology, 2628 CD Delft, The Netherlands (e-mail: S.A.Hosseini@tudelft.nl; S.H.HosseinNiaKani@tudelft.nl).}
}
\maketitle

\begin{abstract}
A key factor that generates significant interest in reset control systems, especially within industrial contexts, is their potential to be designed using a frequency-domain loop-shaping procedure. On the other hand, formulating and assessing stability analysis for these nonlinear elements often depends on access to parametric models and numerically solving linear matrix inequalities. These specific factors could present challenges to the successful implementation of reset control within industrial settings.
Moreover, one of the most effective structures for implementing reset elements is to use them in parallel with a linear element. Therefore, this article presents the development of the frequency domain-based $H_\beta$ stability method from a series to a more general structure of reset control systems. Additionally, it investigates the behavior of different reset elements in terms of the feasibility of stability in the presence of time delay.
To illustrate the research findings, two examples are provided, including one from an industrial application.
\end{abstract}

\begin{IEEEkeywords}
Nonlinear control, Reset control systems, Quadratic stability, Frequency domain-based stability.
\end{IEEEkeywords}

\section{Introduction} \label{Introduction}
    \IEEEPARstart{R}{eset} elements are nonlinear filters used to overcome the fundamental performance limitations of linear time-invariant (LTI) control systems \cite{zhao2019overcoming}.
 The increasing demand for extremely fast and accurate performance in fields such as precision motion control is pushing linear controllers to their limits \cite{boyd1991linear}.
The concept of reset control was initially introduced in \cite{clegg1958nonlinear} as a nonlinear integrator, later known as the Clegg integrator (CI). It demonstrated promising behavior in overcoming the limitations inherent in linear feedback control caused by Bode’s gain-phase relationship \cite{CgLp}.
Over time, more advanced reset components were created, including the first-order reset element (FORE) \cite{horowitz1975non}, the generalized first-order reset element (GFORE) \cite{guo2009frequency}, and the second-order reset element (SORE) \cite{hazeleger2016second}.

By using the sinusoidal-input describing function (SIDF) method \cite{vander1968SIDF}, a reset element can be represented in the frequency domain. This allows us to perform frequency domain analysis when a reset element is incorporated with LTI elements in a control loop. Frequency response analysis is a technique that is used to assess the magnitude and phase properties of a control system. Thus, a designer can shape and tune the performance of closed-loop systems based on their open-loop analysis, a process known as loop shaping \cite{van2017frequency}. This design approach enables us to evaluate the closed-loop performance of the control system without developing a parametric model of the plant. Instead, we can use a frequency response function (FRF), which can be derived solely from measurement data \cite{franklin2002feedback}.

The stability analysis of a reset control system (RCS) is as important as the performance analysis to ensure reliable and predictable operation. To assess the stability of a closed-loop RCS, several methods have been proposed \cite{banos2010reset,paesa2011design,banos2012reset,guo2015analysis,vettori2014geometric,griggs2009interconnections,hollot1997stability}. These include the reset instants dependent method \cite{banos2010reset,paesa2011design}, the quadratic Lyapunov functions method \cite{banos2012reset,guo2015analysis,vettori2014geometric}, the small gain, passivity, and IQC approaches \cite{griggs2009interconnections,hollot1997stability}. Utilizing these approaches typically requires solving LMIs and deriving a parametric approximation of the plant, both of which are time-consuming processes and introduce uncertainties. Therefore, similar to frequency domain-based performance analysis, a frequency domain-based stability analysis is desired for RCSs.

To this respect, some frequency domain-based stability methods are introduced \cite{beker2004fundamental,beker1999stability,van2017frequency,van2024scaled,dastjerdi2023frequency,guo2015analysis}.
In \cite{van2017frequency}, a frequency domain-based stability method was developed for reset control systems, focusing on their input/output behavior. However, this method can be applied when the output is confined to a certain sector bound, as well as when the reset action is only triggered by the zero crossing of the reset element input.
In \cite{van2024scaled}, they approximate the scale graph \cite{huang2020tight} of reset controllers, resulting in a graphical tool to assess the stability of an RCS. However, it is not applicable to the zero-crossing triggered reset elements, which are the most common reset controllers.
In \cite{beker1999stability, beker2004fundamental}, the $H_\beta$ method was introduced, which includes a strictly positive real condition to guarantee closed-loop stability. It has been
proven that an RCS can be quadratically
stable if and only if the $H_\beta$ condition is satisfied. In fact,
the $H_\beta$ condition is a necessary and sufficient condition for the existence of a quadratic Lyapunov function \cite{beker2004fundamental,banos2010reset}. However, in general, this approach still relies on a parametric model to identify both a positive definite matrix and a vector that define the output of a transfer matrix, which must be strictly positive real.\\ Additionally, the method discussed in \cite{beker2004fundamental} is limited to RCSs where the error is forced to be the reset-triggered signal, and no pre-filtering of the reset element is possible. The same limitations apply to \cite{guo2015analysis}, as the quadratic stability presented there relies on the stability theorem from \cite{beker2004fundamental}. Although \cite[Remark 3.4]{guo2015analysis} introduces an additional condition in the $H_\beta$ method for non-zero after-reset values ($\gamma$), the exact proof and the effect of the shaping filter are not investigated there. In this study, we extend the $H_\beta$ method to accommodate any arbitrary reset surface (condition) and non-zero after-reset values ($\gamma$). While this extension is not the primary contribution of our work, it serves to validate the other theorems and lemmas presented. The proof is straightforward, drawing from existing studies such as \cite{banos2012reset}, \cite{guo2015analysis}, and \cite{beker2004fundamental}.

In \cite{dastjerdi2023frequency}, novel graphical stability conditions in the frequency domain are proposed for control systems incorporating first- and second-order reset elements, along with a shaping filter in the reset line. This approach facilitates the assessment of uniform bounded-input bounded-state (UBIBS) stability of reset control systems using the $H_\beta$ method without the need to solve LMI-based conditions. The matrix-based $H_\beta$ transfer function was first converted to an FRF-based transfer function in \cite[Corollary 3.12]{beker2001analysis} for a simple reset control system, where the reset element functions as the sole controller in the loop. However, in \cite{dastjerdi2023frequency}, a broader class of reset control systems is considered. Although an FRF-based transfer function is derived, the derivation is intuitive, lacking a formal mathematical connection between the matrix-based $H_\beta$ transfer function and the FRF-based transfer function for the proposed structure. Furthermore, the introduced method applies only to the series structure of a RCS without an LTI element in parallel with the reset element. Consequently, the analysis is not valid for reset systems with a feedthrough term. Therefore, since satisfying the $H_\beta$ conditions is a necessary requirement in the uniformly exponential convergence lemma presented in \cite{dastjerdi2022closed}, a significant gap remains in guaranteeing the convergence of non-series reset control systems using FRF-based methods.   

In \cite{banos2010reset} and \cite{guo2015analysis}, the effect of delay on the $H_\beta$ condition has been studied. However, the dynamics of the delay must still be known to calculate the $H_\beta$ transfer function. Based on the graphical method in \cite{dastjerdi2023frequency}, time delay could be part of the plant's FRF; thus, it directly affects the $H_\beta$ transfer function. In this study, we aim to investigate the effect of the $H_\beta$ transfer function with time delay on the quadratic stability conditions of a RCS when using different reset elements. This helps avoid wasting time assessing the stability of certain types of reset elements (GFORE, CI, PCI) using the FRF-based $H_\beta$ method in the presence of time delay.\\

Regarding the mentioned gaps in the existing literature, the contributions of this paper are as follows:
\begin{itemize}
    \item Provide the extended $H_\beta$ method for cases involving non-zero after-reset values, including the shaping filter.
    \item Mathematically calculate the FRF-based $H_\beta$ transfer function (in \cite{dastjerdi2023frequency} it is intuitively presented only for a limited class of RCSs):
    \begin{itemize}
        \item For general reset control systems, including parallel branch.
        \item Using matrix equality tools to prove the equivalence between a state-space-based transfer function and an FRF-based transfer function.
    \end{itemize}
    \item{Develop the conditions in the $H_\beta$ theorem for the new $H_\beta$ transfer function, while still requiring only the frequency response functions of the loop components.}
    \item Showing that the FRF-based $H_\beta$ method always leads to an infeasible stability assessment in the presence of delay when using CI or PCI.
\end{itemize}

The remainder of this paper is structured as follows. Section \ref{sec: preliminaries} provides descriptions of the reset element, LTI components, and the closed-loop system while covering some key theorems and lemmas that serve as the foundation for the rest of this study.
Section \ref{sec: main results} presents the main results of this paper. It first introduces the new FRF-based transfer function for the matrix-based $H_\beta$ transfer function and then develops the frequency domain-based conditions for the quadratic stability of general RCSs.
In Section \ref{sec: delay}, the effect of delay, which is an inseparable part of industrial settings, is examined in terms of its influence on the feasibility of this approach.
The utility and validation of the findings of this study are showcased through simulated examples in Section \ref{sec: example}. Finally, conclusions and suggestions for future studies are given in Section \ref{sec: conclusion}.
\section{Preliminaries}\label{sec: preliminaries}
\subsection{System description}
In the following, we provide a formal introduction to the reset control system and demonstrate its parallel interconnection with an LTI system, resulting in a SISO closed-loop system, as depicted in Fig. \ref{fig:Block diagram cl}. The linear part contains $G$ as the plant; $C_{1}$, $C_{2}$, and $C_{3}$ as the linear controllers; and a shaping filter $C_\text{s}$ with a proper stable transfer
function. Additionally, $C_\text{R}$ serves as the reset element, with its state-space realization presented as:
\begin{equation} 
		C_\text{R} : \begin{cases}
            \dot{x}_r(t)=A_rx_r(t)+B_ru_1(t),  \qquad \qquad e_r(t)\neq0,\\
            x_r(t^+)=\gamma x_r(t), \qquad e_r=0\wedge(1-\gamma)x_r(t)\neq0,\\
            u_r(t)=C_rx_r(t)+D_ru_1(t),
		\end{cases}
  \label{eq.SS reset}
	\end{equation}
where $A_r\in\mathbb{R}$, $B_r\in\mathbb{R}$, $C_r\in\mathbb{R}$, and $D_r\in\mathbb{R}$ represent the state-space matrices of the reset element, and the reset value is denoted by $\gamma \in\mathbb{R} $. $x_r(t)\in\mathbb{R}$ is the only state of the reset element, $x_r(t^{+})\in\mathbb{R}$ is the after reset state, $u_1(t)\in\mathbb{R}$ and $u_r(t)\in\mathbb{R}$ represent the input and output of the reset element, respectively. Also, $e_r(t)$ is the output of the shaping filter $C_s$.

\begin{figure}[!t]
\centerline{\includegraphics[scale=0.45,,trim=0 0 0 0,clip]{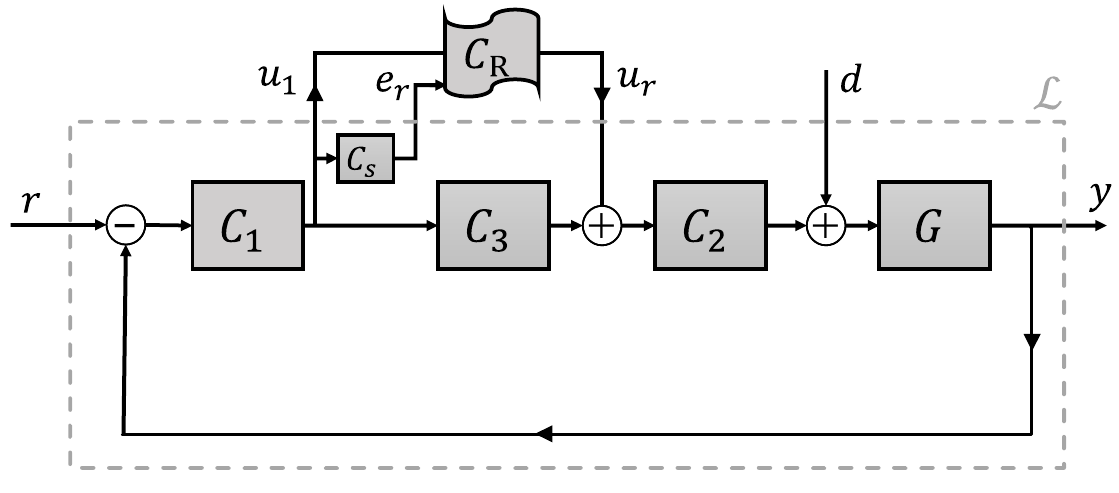}}
\caption{The closed-loop architecture of a general reset control system.}
\label{fig:Block diagram cl}
\end{figure}

In this study, $C_R$ is considered to be a first-order reset element (CI or GFORE). Due to the utilization of a parallel filter with the reset element, PCI is not considered, as it can be constructed using a CI in parallel with the desired gain. By considering $C_rB_r=\omega_k>0$ and $A_r=-\omega_r$ ($\omega_r\geq0$), the general form of the reset element used in this paper is defined, and its base linear transfer function is as follows
\begin{equation}
\begin{split}
    \label{RCS expression}
    R(s)&=C_r(s-A_r)^{-1}B_r+D_r \\
    &=\frac{\omega_k}{s+\omega_r}+D_r,
    \end{split}
\end{equation}
where $s \in\mathbb{C}$ is the Laplace variable. The reset element represents a CI or PCI when $\omega_r=0$ and represents a GFORE element when $\omega_r \neq0$.

For the LTI part of the closed-loop system, which is denoted by $\mathcal{L}$, we have
\begin{equation} 
		\mathcal{L} : \begin{cases}
			\label{eq.SS linear}
			\dot{x}_l(t)=Ax_l(t)+B_uu_r(t)+Bw(t),\\
			y(t)=Cx_l(t),\\
			u_1(t)=C_ux_l(t)+D_uw(t), \\
                e_r(t)=C_ex_l(t)+D_e w(t),
		\end{cases}
	\end{equation}
where $x_l(t)\in\mathbb{R}^{n_l}$ is the state of the LTI part of the system, and $w(t)=\begin{bmatrix}
    r(t) & d(t)
\end{bmatrix}^T\in\mathbb{R}^{2}$, in which $r(t)$ and $d(t)$ are the reference and disturbance signal of the system, respectively. Also, $A\in\mathbb{R}^{n_l \times n_l}$, $B\in\mathbb{R}^{n_l \times 2}$, $C\in\mathbb{R}^{1 \times n_l}$, $B_u\in\mathbb{R}^{n_l \times 1}$, $C_u\in\mathbb{R}^{1 \times n_l}$, $D_u\in\mathbb{R}^{1 \times 2}$, $C_e\in\mathbb{R}^{1 \times n_l}$ and $D_e\in\mathbb{R}^{1 \times 2}$ are the corresponding dynamic matrices.\\

\begin{assumption}
    \label{ass: assumption no direct feed through}
    In this study, it is assumed that there is no direct feedthrough from input $w(t)$ and $u_r(t)$ to plant output $y(t)$ and from $u_r(t)$ to $u_1(t)$ and $e_r(t)$.\\
\end{assumption}

Assumption \ref{ass: assumption no direct feed through} is reasonable, as it pertains to any causal LTI element $C_1$, $C_2$, $C_3$, and any plant \( G \) with a relative degree greater than zero, which includes the vast majority of motion and mass-based systems. Hence, the closed-loop state-space representation of the overall reset control system (RCS) can be expressed as follows
\begin{equation} 
		\text{RCS} : \begin{cases}
			\label{eq.SS closed loop}
            \dot{x}(t)=\bar{A}x(t)+\bar{B}w(t), \qquad   x(t)\notin \mathcal{F},\\
            x(t^+)=A_\rho x(t), \: \: \: \, \qquad   \qquad x(t)\in \mathcal{F},\\
            e_r(t)=\bar{C}_ex(t)+\bar{D}_e w(t), \\
            y(t)=\bar{C}x(t),
		\end{cases}
	\end{equation}
in which $x(t)=\begin{bmatrix}
    x_r(t)^T & x_l(t)^T
\end{bmatrix}^T \in\mathbb{R}^{1+n_l}$, $\bar{C}=\begin{bmatrix}
    0 & C
\end{bmatrix}$, $\bar{B}=\begin{bmatrix}  0_{1\times 2}\\ B \end{bmatrix} +$ $\begin{bmatrix}  B_rD_u & 0_{1\times 1}\\ B_uD_rD_u & 0_{n_l\times 1} \end{bmatrix}$, $\bar{A}=\begin{bmatrix}
    A_r & B_rC_u\\
    B_uC_r & A+B_uD_rC_u
\end{bmatrix}$, $A_\rho = \begin{bmatrix}
    \gamma & 0_{1\times n_l}\\
    
    0_{n_l\times 1} & I_{n_l\times n_l}
\end{bmatrix}$, and $\bar{C}_e=\begin{bmatrix}
    0 & C_e
\end{bmatrix}$. With reset surface $\mathcal{F}=\{x(t) \in \mathbb{R}^{n_l+1}:\bar{C}_ex(t)+\bar{D}_e w(t)=0 \wedge (I-A_\rho)x(t)\neq0\}$.

\subsection{Foundational theorems and lemmas}
This part covers the fundamental theorems and lemmas that form the foundation for the rest of this study.
As mentioned earlier, this study focuses on the $H_\beta$ stability approach. The $H_\beta$ condition is based on a quadratic Lyapunov function that must be decreasing over the entire state space along the system trajectories and non-increasing at the reset jumps. This method was first introduced in \cite{beker2004fundamental} for the case where the reset surface is defined as $\{x(t) \in \mathbb{R}^{n_l+1}:\bar{C}x(t)=0 \wedge (I-A_\rho)x(t)\neq0\}$ and $\gamma=0$, as in \cite[Chapter 4.4]{banos2012reset}. However, in this study, the reset surface is considered as $\mathcal{F}$, with non-zero cases for $\gamma$ also examined. Consequently, the modified $H_\beta$ method for assessing the stability of the reset control system in \eqref{eq.SS closed loop} is described in the following theorem.\\  

\begin{theorem}
 The zero equilibrium of the reset control system \eqref{eq.SS closed loop} with $w = 0$ is globally uniformly asymptotically stable if there exist $\varrho=\varrho^T > 0$ and $\beta\in\mathbb{R}$ such that the transfer function
\begin{equation}
    \label{eq.H beta}
    H_{\beta}(s)=C_0(sI-\bar{A})^{-1}B_0,
\end{equation}
with
\begin{equation}
 C_0=\begin{bmatrix}
\varrho & \beta C_e
\end{bmatrix}, \quad
B_0= \begin{bmatrix}
    1 \\
    0_{n_l \times 1}
\end{bmatrix},
\label{eq. C0B0}
\end{equation}
is Strictly Positive Real (SPR), $(\bar{A},B_0)$ and $(\bar{A},C_0)$ are controllable and observable respectively, and $-1\leq\gamma\leq1$.
\label{Th. Theorem1}
\end{theorem}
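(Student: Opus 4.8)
The plan is to build an explicit quadratic Lyapunov function $V(x)=x^{T}Px$ and to verify the two properties that characterize a Lyapunov function for a reset (hybrid) system: strict decrease along the continuous flow and non-increase across the reset jumps. The bridge from the frequency-domain SPR hypothesis on $H_\beta$ to a time-domain matrix $P$ is the Kalman--Yakubovich--Popov (positive real) lemma. First I would invoke it: since $(\bar{A},B_0)$ is controllable, $(\bar{A},C_0)$ is observable, and $H_\beta(s)=C_0(sI-\bar{A})^{-1}B_0$ is SPR (so in particular $\bar{A}$ is Hurwitz), there exists $P=P^{T}>0$ with
\begin{equation}
\bar{A}^{T}P+P\bar{A}<0, \qquad PB_0=C_0^{T}.
\label{eq.KYP}
\end{equation}
Setting $V(x)=x^{T}Px$, which is positive definite and radially unbounded, the flow condition is then immediate: on $\{x\notin\mathcal{F}\}$ with $w=0$ the dynamics are $\dot{x}=\bar{A}x$, so $\dot{V}=x^{T}(\bar{A}^{T}P+P\bar{A})x<0$ for $x\neq0$ by the first relation in \eqref{eq.KYP}.

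The jump condition is where the new hypotheses $-1\le\gamma\le1$ and the general surface $\mathcal{F}$ enter, and where I expect the real work to lie. Partitioning $P$ conformally with $x=\begin{bmatrix} x_r^{T} & x_l^{T}\end{bmatrix}^{T}$ as $P=\begin{bmatrix} p_{11} & P_{12} \\ P_{12}^{T} & P_{22}\end{bmatrix}$, the relation $PB_0=C_0^{T}$ with $B_0=\begin{bmatrix} 1 & 0\end{bmatrix}^{T}$ and $C_0=\begin{bmatrix}\varrho & \beta C_e\end{bmatrix}$ forces $p_{11}=\varrho$ and $P_{12}=\beta C_e$. Since $A_\rho=\begin{bmatrix}\gamma & 0 \\ 0 & I\end{bmatrix}$, a direct computation gives
\begin{equation}
A_\rho^{T}PA_\rho-P=\begin{bmatrix} (\gamma^{2}-1)\varrho & (\gamma-1)\beta C_e \\ (\gamma-1)\beta C_e^{T} & 0\end{bmatrix},
\end{equation}
so that
\begin{equation}
V(A_\rho x)-V(x)=(\gamma^{2}-1)\varrho\,x_r^{2}+2(\gamma-1)\beta\,x_r\,C_e x_l.
\end{equation}

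The key observation is that every $x\in\mathcal{F}$ with $w=0$ satisfies $\bar{C}_e x=C_e x_l=0$, which annihilates the cross term and leaves only $(\gamma^{2}-1)\varrho\,x_r^{2}$; because $\varrho>0$ and $-1\le\gamma\le1$ imply $\gamma^{2}-1\le0$, we obtain $V(A_\rho x)\le V(x)$. This is precisely the role of the admissible range of $\gamma$, and it is exactly the generalization beyond the classical case $\gamma=0$, $\bar{C}_e=\bar{C}$. With strict decrease along flows, non-increase at jumps, and $V$ proper and positive definite, standard Lyapunov and invariance arguments for reset (hybrid) systems then yield global uniform asymptotic stability of the origin.

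I anticipate the main obstacle to be twofold. First, one must apply the correct strictly-proper variant of the positive real lemma so that the clean relations in \eqref{eq.KYP} genuinely hold: because $H_\beta$ is strictly proper, the SPR definition and its behavior as $\omega\to\infty$ require care, and the minimality assumptions are essential to obtain a $P$ that is simultaneously positive definite and compatible with $PB_0=C_0^{T}$. Second, the usual hybrid-systems technicalities --- ensuring the reset instants are well defined and excluding Zeno/beating behavior --- are needed to pass rigorously from the Lyapunov inequalities to the asymptotic stability conclusion; here I would lean on the well-posedness framework and convergence results already cited in the preliminaries.
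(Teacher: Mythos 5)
Your proposal is correct and follows essentially the same route as the paper's proof in Appendix A: the KYP lemma converts SPR plus minimality into $P>0$ with $\bar{A}^{T}P+P\bar{A}<0$ and $PB_0=C_0^{T}$, and the jump condition reduces on the reset surface (where $C_e x_l=0$) to $(\gamma^{2}-1)\varrho\,x_r^{2}\le 0$. The only difference is ordering — the paper first uses the jump computation to motivate the structural constraint $B_0^{T}P=C_0$ and then invokes the generalized KYP lemma, whereas you invoke KYP first and verify the jump condition afterward — which is purely presentational.
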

As previously mentioned, the only differences between Theorem \ref{Th. Theorem1} and the $H_\beta$ theorem presented in \cite[Proposition 4.5]{banos2012reset} are the reset surface and the value of $\gamma$. Therefore, the proof of Theorem \ref{Th. Theorem1} is provided in Appendix \ref{App: H_b}, following the approach used in \cite[Proposition 4.5]{banos2012reset}.\\


\begin{lemma}
\cite[Lemma 6.1]{khalil2002nonlinear}, Let $H(s)$ be a proper rational $p\times p$ transfer function and assume det$[H(s) + H^{T}(-s)]$ is not identically zero. Then, $H(s)$ is SPR if and only if :
\begin{itemize}
    \item $H(s)$ is Hurwitz,
    \item $H(j\omega) + H^{T}(-j\omega)$ is positive definite for all $\omega \in \mathbb{R}$,
    \item either $H(\infty) + H^{T}(\infty)$ is positive definite or
     if $H(\infty) + H^{T}(\infty)$ is positive semi-definite, \\ $\lim_{\omega \to \infty} \omega^2 M^{T}[H(j\omega) + H^{T}(-j\omega)]M>0$, for any $p (p - q)$ full rank matrix $M$ such that $M^{T}[H(\infty) + H^{T}(\infty)]M=0$, and $q = \text{rank}[H(\infty) + H^{T}(\infty)]$.\\
\end{itemize}
\label{lem: SPR}
\end{lemma}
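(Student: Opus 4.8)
The plan is to argue from the definition of strict positive realness, namely that $H(s)$ is SPR precisely when the shifted function $H(s-\epsilon)$ is positive real (PR) for some $\epsilon>0$, and to translate the PR requirements (analyticity in the closed right half-plane and positive semidefiniteness of the Hermitian part $H(s)+H^T(-s)$ there) into the three stated frequency-domain conditions. I would prove the two implications separately, using three tools: the pole-shift correspondence between $H$ and $H(s-\epsilon)$, the minimum principle for the harmonic functions $s\mapsto \operatorname{Re}[v^*H(s)v]$, and the asymptotic expansion of $H(j\omega)+H^T(-j\omega)$ as $|\omega|\to\infty$.

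For necessity (SPR $\Rightarrow$ the three conditions), fix $\epsilon>0$ with $H(s-\epsilon)$ PR. Analyticity of $H(s-\epsilon)$ for $\operatorname{Re}(s)\ge 0$ places every pole of $H$ at real part $<-\epsilon$, giving the Hurwitz condition. PR further yields $\operatorname{Re}[v^*H(s)v]\ge 0$ on the open half-plane $\operatorname{Re}(s)>-\epsilon$ for every vector $v$; since this function is harmonic and the imaginary axis lies in the interior of that half-plane, the minimum principle forbids an interior zero unless the function is identically zero, which the hypothesis $\det[H(s)+H^T(-s)]\not\equiv 0$ excludes. This upgrades $\ge 0$ to the strict $H(j\omega)+H^T(-j\omega)>0$ at every finite $\omega$. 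The condition at $\omega=\infty$ then follows by expanding the Hermitian part: it tends to $H(\infty)+H^T(\infty)$, and in the directions $M$ where this limit degenerates the leading correction is of order $\omega^{-2}$, whose positivity is exactly the stated weighted-limit condition.

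For sufficiency (the three conditions $\Rightarrow$ SPR), I would exhibit a small $\epsilon>0$ rendering $H(s-\epsilon)$ PR. The Hurwitz hypothesis keeps the poles of $H(s-\epsilon)$ in the open left half-plane for all small $\epsilon$, securing analyticity in $\operatorname{Re}(s)\ge 0$. It remains to show $H(j\omega-\epsilon)+H^T(-j\omega-\epsilon)\ge 0$ for all $\omega$. On a compact frequency band the second condition provides a uniform positive lower bound that survives a sufficiently small shift by continuity; on the tail $|\omega|\to\infty$ the third condition supplies the quantitative $\omega^{-2}$ margin in the degenerate directions needed so that the perturbation induced by the shift cannot destroy positivity. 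A covering argument combining the two estimates then fixes a single admissible $\epsilon$.

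The step I expect to be the main obstacle is the behavior at $\omega=\infty$, where pointwise positivity at every finite frequency is genuinely insufficient because $H(j\omega)+H^T(-j\omega)$ may decay to $0$. The case split in the third condition, together with the full-rank matrix $M$ spanning the kernel of $H(\infty)+H^T(\infty)$ and the $\omega^2$ weighting, is precisely the device that converts pointwise positivity into the \emph{uniform} positivity required for the single shift $s\mapsto s-\epsilon$ to make $H(s-\epsilon)$ positive real; reconciling this tail estimate with the compact-band estimate under one common $\epsilon$ is the delicate accounting at the heart of the proof.
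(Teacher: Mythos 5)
The paper does not prove this lemma at all: it is quoted verbatim from the cited source (Khalil, Lemma 6.1) and used as an imported tool, so there is no in-paper proof to compare against. Your outline reproduces the standard proof strategy from that source --- defining SPR via positive realness of the shifted function $H(s-\epsilon)$, getting necessity from the Hurwitz/minimum-principle/asymptotic-expansion arguments (correctly invoking $\det[H(s)+H^{T}(-s)]\not\equiv 0$ to rule out the degenerate harmonic case), and getting sufficiency by combining a compact-band lower bound with the $\omega^{-2}$ margin in the kernel directions of $H(\infty)+H^{T}(\infty)$ --- and contains no wrong step, though it remains a plan: the uniform-in-$\omega$ accounting you defer at the end (the shift perturbs $H(j\omega)$ by $O(\epsilon\,\omega^{-2})$, exactly the order of the tail margin, plus the cross-terms between kernel and non-kernel directions) is the actual technical content of the sufficiency direction.
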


\begin{definition}
    \label{well posedness}
\cite{dastjerdi2023frequency}, A time $\bar{T} > 0$ is called a reset instant for the reset control system \eqref{eq.SS closed loop} if $e_r(\bar{T})=0\,\wedge\,(I-A_\rho)x(\bar{T})\neq0$. For any given initial condition and input $w$, the resulting set of all reset instants defines the reset sequence $\{t_k\}$, with $t_k \leq t_{k+1}$ for all $k \in \mathbb{N}$. The reset instants $t_k$ have the well-posedness property if for any initial condition $x_0$ and any input $w$, all the reset instants are distinct, and there exists $\lambda > 0$ such that, for all $k \in \mathbb{N}$, $\lambda \leq t_{k+1} - t_k$.\\
\end{definition}
Note that the second condition in $\mathcal{F}$, $(I- A_\rho)x(t)\neq0$, is imposed for well-posedness, to avoid the so-called re-resetting, that is, the fact that immediately after a reset, the state vector could satisfy the reset condition again, implying formally an infinite sequence of re-settings to be performed
instantaneously, also referred to as beating. Thus, to avoid this, the term $(I- A_\rho)x(t)\neq0$ is added in the definition of $\mathcal{F}$ (see \cite[Section 1.4.1]{banos2012reset}). Simply it means
\begin{equation}
    \nonumber
    x(t)\in \mathcal{F} \Rightarrow  x(t^{+})\notin \mathcal{F}.\\
\end{equation}

\begin{assumption}
    \label{assumption}
    There are infinitely reset instants and $\lim_{k \to \infty} t_k=\infty$.\\
\end{assumption}

Assumption \ref{assumption} is necessary to guarantee the existence of a rest instant $t_k$ as $k\rightarrow \infty$ for the proof of Lemma \ref{lemma UBIBS} and Lemma \ref{convrgence} (will be presented further). However, if it does not hold, it means there is no reset after a time $t_{k'}$, which implies that the RCS behaves like its base linear system ($A_\rho=I$ in \eqref{eq.SS closed loop}). In this case, the stability and convergence properties of the RCS are equivalent to those of its base linear system. This means that if the base linear system is stable and has a convergent solution, then the RCS is stable and has a convergent solution as well.\\
In the following, the definition of the Bohl function and its characteristics are presented, as UBIBS stability and convergence of the RCSs have been established for this class of input signals in Lemma \ref{lemma UBIBS} and \ref{convrgence}.
\begin{definition}
    \label{def: bohl function}
    \cite[Definition 2.5]{Bohl_trentelman2002control}, A function that is a linear combination of functions of the form $t^k e^{\lambda t}$, where the $k$'s are nonnegative integers and $\lambda \in \mathbb{C}$, is called a Bohl function. The numbers $\lambda$ that appear in this linear combination (and cannot be canceled) are called the characteristic exponents of the Bohl function.\\
\end{definition}

\begin{remark}
    In \cite[Theorem 2.7]{Bohl_trentelman2002control} it is shown that if $p$ and $q$ are Bohl functions then $p+q$, $pq$ and $\dot{p}$ are also Bohl functions. Additionally, it can easily be shown that step functions, ramp functions, and any sinusoidal functions are Bohl functions.\\
\end{remark}

\begin{lemma}
    \label{lemma UBIBS}
    \cite[Lemma 2]{dastjerdi2023frequency}, Consider the reset control system \eqref{eq.SS closed loop}. Suppose that
    \begin{itemize}
        \item Assumption \ref{assumption} holds;
        \item $-1<\gamma<1$;
    \item All conditions in Theorem \ref{Th. Theorem1} hold;
        \item at least one of the following conditions holds:
        \begin{itemize}
            \item $C_s=1$;
            \item the reset instants have the well-posedness property.
        \end{itemize}
    \end{itemize}
    Then, the reset control system \eqref{eq.SS closed loop} has a well-defined unique left-continuous response for any initial condition $x_0$ and any input $w$, which is a Bohl function. In addition, the reset control system \eqref{eq.SS closed loop} has the UBIBS property.\\
\end{lemma}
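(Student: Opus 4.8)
The plan is to split the statement into two essentially independent claims and attack them with different machinery: (i) the existence, uniqueness and regularity (left-continuity and a piecewise-Bohl structure) of the response, which is a well-posedness question governed by the analytic nature of Bohl functions; and (ii) the UBIBS property, which I would derive from the quadratic Lyapunov certificate that Theorem~\ref{Th. Theorem1} already guarantees.

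For (i) I would construct the solution inductively over the flow intervals. Starting from $x_0$, on the first interval the dynamics in \eqref{eq.SS closed loop} are the linear system $\dot{x}=\bar{A}x+\bar{B}w$; since $w$ is a Bohl function and the entries of $e^{\bar{A}t}$ are Bohl (linear combinations of $t^{m}e^{\lambda t}$, $\lambda\in\sigma(\bar A)$), the variation-of-constants formula together with the closure of the Bohl class under sums, products and integration (the remark following Definition~\ref{def: bohl function}) shows $x(\cdot)$ is Bohl on that interval, and hence so is the triggering signal $e_r=\bar{C}_e x+\bar{D}_e w$. A nonzero Bohl function is real-analytic with isolated zeros, so the first reset instant $t_1$ is well defined; the jump $x(t_1^{+})=A_\rho x(t_1)$ is deterministic, and the clause $(I-A_\rho)x\neq0$ in $\mathcal{F}$ forces $x(t_1^{+})\notin\mathcal{F}$ (as recorded after Definition~\ref{well posedness}), so the process restarts on a new flow interval. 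Iterating, and using Assumption~\ref{assumption} ($t_k\to\infty$) together with the extra hypothesis — $C_s=1$, which makes $e_r$ a genuine unfiltered error, or the well-posedness property — to rule out accumulation/beating, I obtain a strictly increasing sequence $\{t_k\}$ and a unique response that is left-continuous by convention and Bohl on each flow interval.

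For (ii) I would invoke Theorem~\ref{Th. Theorem1}. SPR-ness of $H_\beta$ plus minimality of $(\bar A,B_0)$ and $(\bar A,C_0)$ yields, via the positive-real (KYP) lemma, a matrix $P=P^{T}>0$ with $\bar{A}^{T}P+P\bar{A}\le0$; and since $-1<\gamma<1$, the construction also gives the non-expansive jump inequality $A_\rho^{T}PA_\rho-P\le0$, so $V(x)=x^{T}Px$ never increases at resets. Two facts then close the argument. First, because $H_\beta$ is Hurwitz and its realization is minimal, every eigenvalue of $\bar A$ is a pole of $H_\beta$, hence $\bar A$ is Hurwitz; therefore the flow $\dot{x}=\bar{A}x+\bar{B}w$ is exponentially stable and satisfies the standard LTI bound $\|x(t)\|\le c_1 e^{-\alpha(t-t_k)}\|x(t_k^{+})\|+c_2\sup_{\tau}\|w(\tau)\|$ on each interval. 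Second, the resets can only contract $x$ in the $P$-norm. A routine recursion on the post-reset values $\{x(t_k^{+})\}$, using the equivalence of $V$ with $\|\cdot\|^{2}$, then bounds the entire trajectory by a constant depending only on $\|x_0\|$ and $\sup_{\tau}\|w(\tau)\|$ and uniform in the reset sequence, which is precisely UBIBS.

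I expect the main obstacle to be part (i) for the general filtered, non-series architecture: when $C_s\neq1$ the trigger is a filtered signal, so one cannot reason directly on the error, and the isolated-zeros property of Bohl functions together with the explicit well-posedness hypothesis must be used carefully to certify that $\{t_k\}$ does not accumulate. A secondary subtlety in part (ii) is that the $H_\beta$ Lyapunov matrix delivers only the semidefinite flow inequality $\bar{A}^{T}P+P\bar{A}\le0$, so a bounded-input bound cannot be read off from $\dot V$ alone; I would therefore rely on the Hurwitz property of $\bar A$ for the flow estimate and reserve $P$ solely for the non-expansiveness of the jumps, combining the two in the recursion above.
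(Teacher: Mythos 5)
First, note that the paper does not prove this lemma at all: it is imported verbatim as \cite[Lemma 2]{dastjerdi2023frequency}, so there is no in-paper proof to match. Judged on its own merits, your part (i) is a sound reconstruction of the standard argument (piecewise-linear flow, Bohl closure under sums/products/convolution with $e^{\bar{A}t}$, real-analyticity of nonzero Bohl functions giving isolated zeros of $e_r$, the clause $(I-A_\rho)x\neq0$ preventing beating, and Assumption \ref{assumption} plus the $C_s=1$/well-posedness alternative giving $t_k\to\infty$).

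Part (ii), however, has a genuine gap, and it originates in a false premise. You assert that the $H_\beta$/KYP machinery delivers only the semidefinite flow inequality $\bar{A}^{T}P+P\bar{A}\le 0$, and on that basis you discard $V=x^{T}Px$ for the flow estimate and instead run a recursion on the post-reset values using the Euclidean LTI bound $\|x(t)\|\le c_1e^{-\alpha(t-t_k)}\|x(t_k^{+})\|+c_2\|w\|_\infty$ together with $P$-norm non-expansiveness of the jumps. This recursion does not close: the overshoot constant $c_1\ge 1$ (and the norm-equivalence constants between $\|\cdot\|$ and $\|\cdot\|_P$) means that when the inter-reset gaps $t_{k+1}-t_k$ are small the per-interval contraction factor exceeds one, so the sequence $\{\|x(t_k^{+})\|\}$ can grow without bound. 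A uniform dwell time would repair this, but the lemma explicitly allows the branch $C_s=1$ \emph{without} the well-posedness property of Definition \ref{well posedness}, and Assumption \ref{assumption} only gives $t_k\to\infty$, not a positive lower bound on the gaps; so in exactly that branch your argument fails. The repair is to notice that SPR-ness in fact yields the \emph{strict} inequality $\bar{A}^{T}P+P\bar{A}\le-2\varepsilon P$ (this is visible in \eqref{eq: KYP} of the paper's Appendix A), whence $\dot{V}\le-\varepsilon V+c\|w\|_\infty^{2}$ along flows by Young's inequality; combining this with $V(x(t^{+}))\le V(x(t))$ at every jump and a hybrid comparison argument gives $V(x(t))\le e^{-\varepsilon(t-t_0)}V(x_0)+\tfrac{c}{\varepsilon}\|w\|_\infty^{2}$ for all $t$, uniformly in the reset sequence and with no dwell-time requirement. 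That single-Lyapunov-function route is what actually delivers UBIBS here; your two-norm splitting should be replaced by it.
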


Please note that the well-posedness property (Definition \ref{well posedness}) is a reasonable assumption, as it is ensured by the second condition ($(I-A_\rho)x(t) \neq 0$) on the reset surface $\mathcal{F}$. This condition effectively prevents what is known as 're-resetting,' a situation where, immediately after a reset, the state vector may once again satisfy the reset condition, potentially resulting in an infinite sequence of resets (see \cite[Sections 1.41 and 2.2.1]{banos2012reset}).\\

\begin{definition}
    \label{def: convergence}
    \cite[Definition 2]{pavlov2007frequency}, The system $\dot{x}(t) \in F(x(t),w(t))$, with a given continuous on $\mathbb{R}$ input $w(t)$ is said to be (uniformly, exponentially) convergent if:
    \begin{itemize}
        \item all solutions $x_w(t,t_0,x_0)$ are defined for all $t \in [t_0,+\infty)$ and all initial conditions $t_0 \in \mathbb{R}$, $x_0\in\mathbb{R}^n$;
        \item there is a solution $\bar{x}_w(t)$ defined and bounded on $\mathbb{R}$;
        \item the solution $\bar{x}_w(t)$ is (uniformly, exponentially) globally asymptotically stable. \\
    \end{itemize}
\end{definition}

\begin{lemma}
\label{convrgence}
\cite[Lemma 2]{dastjerdi2022closed}, Consider the reset control system in \eqref{eq.SS closed loop} with $C_u=C_e$ and $D_u=D_e$ ($C_s=1$), the system is uniformly exponentially convergent \cite[Definition 2]{pavlov2007frequency} for any input $w$ that qualifies as a Bohl function, if
\begin{itemize}
 \item Assumption \ref{assumption} holds;
        \item $-1<\gamma<1$;
    \item All conditions in Theorem \ref{Th. Theorem1} hold;
\item the initial condition of the reset element is zero.\\
\end{itemize}
\end{lemma}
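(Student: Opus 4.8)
The plan is to verify the three requirements of convergence in Definition \ref{def: convergence} directly, combining the convergent-systems viewpoint of \cite{pavlov2007frequency} with the quadratic Lyapunov structure that the $H_\beta$ conditions of Theorem \ref{Th. Theorem1} guarantee. First I would dispatch forward existence, uniqueness, and boundedness. Since $C_s=1$ and all hypotheses of Lemma \ref{lemma UBIBS} hold (Assumption \ref{assumption}, $-1<\gamma<1$, and the conditions of Theorem \ref{Th. Theorem1}), every initial condition with Bohl input $w$ yields a unique left-continuous Bohl response that is forward complete, and the closed loop \eqref{eq.SS closed loop} is UBIBS; for the relevant bounded inputs this produces uniformly bounded responses, giving the first item of Definition \ref{def: convergence}. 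To obtain the distinguished solution $\bar{x}_w$ bounded on $\mathbb{R}$, I would take the family of responses initialised at times $t_0\to-\infty$ (with zero reset state) and extract a limit from the uniform UBIBS bound; the exponential contraction established below makes this limit well defined and unique, since two bounded-on-$\mathbb{R}$ solutions would have to coincide.

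Next I would build the incremental certificate. Because $H_\beta(s)=C_0(sI-\bar{A})^{-1}B_0$ is SPR and strictly proper (so $H_\beta(\infty)=0$ and the feedthrough vanishes), the KYP lemma furnishes $P=P^T>0$ with $\bar{A}^TP+P\bar{A}<0$ and $PB_0=C_0^T$. As $B_0=[1\;\;0_{1\times n_l}]^T$, the relation $PB_0=C_0^T=[\varrho\;\;\beta C_e]^T$ pins the first column of $P$ to $[\varrho\;\;\beta C_e^T]^T$, which is exactly the coupling needed on the reset surface. For two responses $x_1,x_2$ of \eqref{eq.SS closed loop} under the same $w$, set $\Delta=x_1-x_2$ and $W=\Delta^TP\Delta$. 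Between resets the affine term cancels, $\dot{\Delta}=\bar{A}\Delta$, so $\dot{W}=\Delta^T(\bar{A}^TP+P\bar{A})\Delta\le-cW$ for some $c>0$: the flow is an exponential contraction in the $P$-metric. If the two responses reset simultaneously, $\Delta^+=A_\rho\Delta$, and since both satisfy $\bar{C}_ex+\bar{D}_ew=0$ we get $C_e\Delta_l=0$; using the pinned first column of $P$ one finds $W^+-W^-=(\gamma^2-1)\varrho\,\Delta_r^2\le0$ because $-1<\gamma<1$. Hence across synchronous resets $W$ is non-increasing and the flow drives $\Delta\to0$ exponentially.

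The main obstacle is that two distinct responses do not in general reset at the same instants, and at an asynchronous reset the common certificate fails: if $x_1$ resets while $x_2$ does not, then $\Delta^+=A_\rho\Delta+(\gamma-1)x_{r,2}\,[1\;\;0_{1\times n_l}]^T$, so $W$ is kicked by a term proportional to the \emph{absolute} reset state $x_{r,2}$ of the non-resetting trajectory and is not sign-definite — no bound of $W^+-W^-$ by $W$ alone is available, and this is precisely why a naive incremental Lyapunov argument does not close. The route I would take to remove this gap exploits $C_s=1$: then $e_r=\bar{C}_ex+\bar{D}_ew=C_ux_l+D_uw$ is independent of the reset state $x_r$, so the reset instants of any response are the zero crossings of a signal depending only on $(x_l,w)$. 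Using the Bohl/transversality structure (a nonzero Bohl function has isolated, hence on compacts finitely many, simple zeros, with Assumption \ref{assumption} supplying infinitely many crossings as $t\to\infty$), I would show that as $x_l\to\bar{x}_{l,w}$ the reset sequence of $x$ converges to that of $\bar{x}_w$; once the sequences coincide the synchronous non-increase applies and the flow contraction yields the uniform exponential estimate $\|x(t)-\bar{x}_w(t)\|\le M e^{-\alpha(t-t_0)}\|x(t_0)-\bar{x}_w(t_0)\|$, which is item three of Definition \ref{def: convergence}. The zero reset initial condition anchors this comparison and secures the unique, well-posed response used throughout.

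The step I expect to be hardest is exactly this perturbation bound on the crossing instants: establishing transversality of the trajectories to $\mathcal{F}$ (a crossing velocity of $e_r$ that is bounded away from zero near each reset) and the continuity of the crossing times with respect to the state. This is where the \emph{state dependence of the switching surface}, rather than the Lyapunov algebra, does the real work, and it is the genuine point of departure from the classical convergent-systems proof for smooth systems.
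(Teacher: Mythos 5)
The paper does not actually prove this lemma: it is imported verbatim from \cite[Lemma~2]{dastjerdi2022closed}, with Remark~\ref{proof explanation} only clarifying why the hypotheses $C_u=C_e$, $D_u=D_e$ are imposed. So you are reconstructing the external reference's proof, not the paper's. Your skeleton is sensible and parts of it are correct: invoking Lemma~\ref{lemma UBIBS} for existence, uniqueness, the Bohl property and UBIBS; obtaining $P>0$ with $\bar{A}^TP+P\bar{A}<0$ and $PB_0=C_0^T$ from SPRness via KYP; and the observation that at a \emph{synchronous} reset $C_e\Delta_l=0$ forces $W^+-W^-=(\gamma^2-1)\varrho\,\Delta_r^2\le 0$. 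You also correctly identify the crux — two responses do not reset at the same instants, and at an asynchronous reset the jump in $W$ is driven by the absolute reset state of the non-resetting trajectory rather than by $\|\Delta\|$.

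The gap is that your resolution of that crux is not actually an argument. The two reset sequences never coincide after any finite time; they can at best converge, so there are infinitely many short intervals on which exactly one of the two trajectories has already reset, and on each such interval $W$ receives a kick that your own analysis shows is not bounded by $W$ itself. Saying ``once the sequences coincide the synchronous non-increase applies'' assumes the conclusion: to prove the sequences converge you need $\Delta\to 0$, and to prove $\Delta\to 0$ with this certificate you need the sequences to (essentially) coincide. Closing the loop requires a quantitative estimate — e.g.\ that the mismatch between corresponding reset instants is $O(\|\Delta\|)$ (which needs transversality of $e_r$ at its zeros, i.e.\ $\dot e_r$ bounded away from zero near each crossing, something you flag but do not establish and which can fail for general Bohl inputs), and that the resulting kick per reset is then also $O(\|\Delta\|)$ with a summable or exponentially decaying total contribution over the infinitely many resets guaranteed by Assumption~\ref{assumption}. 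A further warning sign is that the hypothesis that the reset element starts from zero initial condition, which the lemma explicitly requires, does no work anywhere in your argument; in the cited reference it is part of what pins down the steady-state trajectory against which the contraction is measured. As written, the proposal is a correct reduction of the problem to its hard step, but that step is left open.
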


\begin{remark}
    \label{proof explanation}
    In Lemma \ref{convrgence}, it is assumed $C_u=C_e$ and $D_u=D_e$ since \cite{dastjerdi2022closed} only considers systems with $C_s=1$.\\
\end{remark}

The main stability condition presented in Theorem \ref{Th. Theorem1} is in the frequency domain; however, it still requires the system model parameters to calculate $H_\beta$ transfer function and the use of an LMI-based method to find $\varrho$ and $\beta$. In \cite[Theorem 2]{dastjerdi2023frequency}, the results from Theorem \ref{Th. Theorem1} and Lemma \ref{lemma UBIBS} are combined into FRF-based conditions for a class of RCSs where $C_3=0$ and $D_r=0$ (the result presented in \cite{dastjerdi2023frequency} is not valid for $D_r\neq0$) using an intuitive derivation of an FRF-based $H_\beta$ transfer function.\\

The following two lemmas are presented as the foundation for calculating the FRF-based $H_\beta$ transfer function for the RCS \eqref{eq.SS closed loop} in the next section.
\begin{lemma}[Partitioned matrix inversion] 
    \label{lem: Mat inv}
     \quad \\
    \cite[Section 9.2.14]{mahmoud2021cyberphysical}, Let's consider an invertible matrix $M$, composed of blocks $Q_1$, $Q_2$, $Q_3$, and $Q_4$, as follows
    \begin{equation}
    \label{eq: M}
        M = \begin{bmatrix}
    Q_1 & Q_2\\
    Q_3 & Q_4
\end{bmatrix},
    \end{equation}
    the inverse of matrix $M$ can be expressed using a similarly structured block representation
    \begin{equation}
        M^{-1} =
\begin{bmatrix}
W & X \\
Y & Z \\
\end{bmatrix},
    \end{equation}
    where the blocks $W$, $X$, $Y$, $Z$, are as follows
\begin{equation}
\begin{cases}
W = \left( Q_1 - Q_2 Q_4^{-1} Q_3 \right)^{-1}, \\
Z = \left( Q_4 - Q_3 Q_1^{-1} Q_2 \right)^{-1},
\end{cases}
\end{equation}
and,
\begin{equation}
\begin{cases}
Y = -Q_4^{-1} Q_3 W, \\
X = -Q_1^{-1} Q_2 Z.
\end{cases}
\end{equation}\\
\end{lemma}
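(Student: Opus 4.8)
The plan is to prove the identity by \emph{direct verification} rather than re-derivation. Since $M$ is square and invertible, it suffices to exhibit a matrix $N$ of the stated block form and check that $MN=I$; for square matrices a right inverse is automatically a two-sided inverse, so this single computation identifies $N=M^{-1}$. First I would expand the product $MN$ in $2\times2$ block form, which yields the four block equations
\begin{align}
Q_1 W + Q_2 Y &= I, & Q_1 X + Q_2 Z &= 0, \\
Q_3 W + Q_4 Y &= 0, & Q_3 X + Q_4 Z &= I.
\end{align}

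Substituting the proposed expressions, the off-diagonal relations collapse immediately: using $Y=-Q_4^{-1}Q_3 W$ gives $Q_3 W + Q_4 Y = Q_3 W - Q_4 Q_4^{-1} Q_3 W = 0$, and using $X=-Q_1^{-1}Q_2 Z$ gives $Q_1 X + Q_2 Z = -Q_1 Q_1^{-1} Q_2 Z + Q_2 Z = 0$. For the diagonal blocks I would factor out $W$ and $Z$: $Q_1 W + Q_2 Y = \bigl(Q_1 - Q_2 Q_4^{-1} Q_3\bigr)W = I$ by the definition of $W$, and likewise $Q_3 X + Q_4 Z = \bigl(Q_4 - Q_3 Q_1^{-1} Q_2\bigr)Z = I$ by the definition of $Z$. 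This closes the verification, the only discipline required being to respect noncommutativity of the blocks and never invoke scalar-style cancellation.

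The one point that genuinely requires care is the \emph{well-definedness} of the formulas rather than the algebra itself. The expressions for $W,X,Y,Z$ presuppose that $Q_1$, $Q_4$, and the two Schur complements $Q_1 - Q_2 Q_4^{-1}Q_3$ and $Q_4 - Q_3 Q_1^{-1} Q_2$ are all invertible. I would justify this through the determinant factorization $\det M = \det(Q_4)\det\bigl(Q_1 - Q_2 Q_4^{-1} Q_3\bigr) = \det(Q_1)\det\bigl(Q_4 - Q_3 Q_1^{-1} Q_2\bigr)$, which shows that once $M$ together with the diagonal blocks $Q_1$ and $Q_4$ are invertible, the corresponding Schur complements are automatically invertible. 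In the intended application these diagonal blocks arise from partitioning $(sI-\bar{A})$ and are invertible for all but finitely many $s$, so the hypotheses hold generically. I do not expect a serious obstacle here, since this is a classical block-inversion identity; the verification is routine once the invertibility bookkeeping is handled.
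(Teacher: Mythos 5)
Your verification is correct: the block multiplication $MN=I$ closes exactly as you describe, and your remark that a one-sided inverse of a square matrix is two-sided legitimately reduces the work to a single product. The paper itself offers no proof of this lemma --- it is quoted verbatim from a textbook reference --- so there is nothing to compare against; your direct check, together with the Schur-complement invertibility bookkeeping via $\det M = \det(Q_4)\det\bigl(Q_1 - Q_2 Q_4^{-1}Q_3\bigr) = \det(Q_1)\det\bigl(Q_4 - Q_3 Q_1^{-1}Q_2\bigr)$, is the standard argument and correctly identifies the only hypotheses the stated formulas actually need (invertibility of $Q_1$ and $Q_4$, which in the paper's application to $sI-\bar{A}$ holds for all but finitely many $s$).
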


\begin{lemma}[Woodbury matrix identity]
    \label{lem: Woodbury}
    \quad \\
      \cite[Appendix. A, Solution to Problem 13.9]{higham2002accuracy}, For matrices $K$, $U$, $J$, and $V$, where $K$ is a $n\times n$ matrix, $J$ is a $k\times k$ matrix, $U$ is a $n\times k$ matrix, and $V$ is a $k\times n$ matrix, the following equation holds
    \begin{equation}
    \begin{split}
        (K + &UJV)^{-1} \\ &= K^{-1} - K^{-1}U(J^{-1} + VK^{-1}U)^{-1}VK^{-1}.
        \end{split}
    \end{equation}
\end{lemma}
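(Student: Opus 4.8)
The plan is to prove the identity by \emph{direct verification}: I would multiply the claimed inverse on the left by $K+UJV$ and show the product collapses to the identity. Set $P := K^{-1} - K^{-1}U(J^{-1}+VK^{-1}U)^{-1}VK^{-1}$ and abbreviate the inner $k\times k$ factor by $S := J^{-1}+VK^{-1}U$. The computation presupposes that $K$, $J$, and $S$ are invertible (and hence that $K+UJV$ is), which is implicit in the statement since the displayed inverses are taken to exist. First I would expand $(K+UJV)P$ by distributing and using $KK^{-1}=I$ to split off the identity term, leaving $I + UJVK^{-1} - US^{-1}VK^{-1} - UJVK^{-1}US^{-1}VK^{-1}$.

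Next I would collect the three non-identity terms, each of which carries a common left factor $U$ and a common right factor $VK^{-1}$, writing them as $U\,[\,\cdot\,]\,VK^{-1}$ with an inner $k\times k$ bracket. Introducing $T := VK^{-1}U$, the bracket reads $J - S^{-1} - JTS^{-1} = J - (I+JT)S^{-1}$. The crux of the argument — and the step I expect to be the only genuine obstacle — is the cancellation of this bracket. It hinges on the \emph{push-through} relation $I + JT = I + JVK^{-1}U = J\,(J^{-1}+VK^{-1}U) = JS$, which gives $(I+JT)S^{-1} = J$, so the bracket is identically zero. Hence $(K+UJV)P = I$, and either the symmetric computation of $P(K+UJV)=I$ or uniqueness of the inverse for square matrices confirms that $P$ is the two-sided inverse, establishing the identity.

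As an alternative — and arguably the more natural route given its placement immediately above — I would derive the identity from the partitioned inversion formula in Lemma \ref{lem: Mat inv} applied to the $(n+k)\times(n+k)$ block matrix $M = \begin{bmatrix} K & -U \\ V & J^{-1} \end{bmatrix}$. With $Q_1=K$, $Q_2=-U$, $Q_3=V$, $Q_4=J^{-1}$, the top-left block of $M^{-1}$ equals $W = (Q_1-Q_2Q_4^{-1}Q_3)^{-1} = (K+UJV)^{-1}$, the left-hand side of the claim. The mild obstacle on this route is that Lemma \ref{lem: Mat inv} supplies only this compact form of $W$; I would additionally invoke its standard companion expansion $W = Q_1^{-1} + Q_1^{-1}Q_2\,Z\,Q_3Q_1^{-1}$, where $Z = (Q_4-Q_3Q_1^{-1}Q_2)^{-1} = S^{-1}$ is the Schur complement of $K$. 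Substituting the block values turns this second expression into $K^{-1} - K^{-1}US^{-1}VK^{-1}$, the right-hand side, and equating the two representations of the single block $W$ yields the Woodbury identity.
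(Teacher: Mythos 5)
The paper does not prove this lemma at all: it is stated as a known result and dispatched with a citation to Higham's book, so there is no in-paper argument to compare against. Your first route (direct verification) is correct and self-contained: expanding $(K+UJV)P$ with $P = K^{-1} - K^{-1}US^{-1}VK^{-1}$ and $S = J^{-1}+VK^{-1}U$, factoring the residual as $U\left[J-(I+JT)S^{-1}\right]VK^{-1}$ with $T=VK^{-1}U$, and killing the bracket via the push-through relation $I+JT=JS$ is exactly the standard one-line verification, and your invertibility caveats (on $K$, $J$, and $S$) are the right ones to flag. Your second route is weaker as a proof: Lemma \ref{lem: Mat inv} as stated in the paper only gives the compact form $W=(Q_1-Q_2Q_4^{-1}Q_3)^{-1}$, and the ``companion expansion'' $W=Q_1^{-1}+Q_1^{-1}Q_2ZQ_3Q_1^{-1}$ that you would need to import is not a mild add-on --- the equality of those two expressions for the top-left block \emph{is} the Woodbury identity in disguise, so that route is close to circular unless you independently derive the second expression (e.g.\ by eliminating in the other order), at which point you have done essentially the work of route one anyway. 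Since the paper offers only a citation, your direct computation is strictly more than the paper provides and would stand on its own.
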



\section{FRF based $H_\beta$ method for parallel reset control systems } \label{sec: main results}
In this section, we present our main result in the form of a theorem. Our objective is to employ the $H_\beta$ method in order to demonstrate the stability of the system described in (\ref{eq.SS closed loop}) in the frequency domain. We aim to transform the conditions specified by the $H_\beta$ method in such a manner that allows us to fulfill them by utilizing the measured FRF instead of the state-space model of the plant.
Regarding the transfer function in (\ref{eq.H beta}), it can be observed that it is not possible to directly utilize the measured FRF to assess the stability mentioned in Theorem \ref{Th. Theorem1} and Lemma \ref{lemma UBIBS}, because matrix $\bar{A}$ should be known. To this respect, in Lemma \ref{lem: Hb transfer}, the $H_\beta$ transfer function is rewritten.\\

\begin{lemma}
\label{lem: Hb transfer}
    Under Assumption \ref{ass: assumption no direct feed through}, the transfer function in (\ref{eq.H beta}), can be rewritten as follows
    \begin{equation}
    \label{eq: H_b FRF}
    \resizebox{1\hsize}{!}{
    $H_\beta(s)=\frac{\beta^{'}L(s)C_s(s)\Big(R(s)-D_r\Big)+\varrho^{'}\bigg(1+L(s)\Big(C_3(s)+D_r\Big)\bigg)\Bigl(R(s)-D_r\Bigl)}{1+L(s)\Bigl(R(s)+C_3(s)\Bigl)},$
    }
    \end{equation}
    where $L(s)=C_1(s)C_2(s)G(s)$, $\beta^{'}=\frac{-\beta}{B_r}$, and $\varrho^{'}=\frac{\varrho}{C_r B_r}$, which $C_r \in \mathbb{R}$ and $B_r \in \mathbb{R}$ are the one-dimensional input and output matrix of the reset element $C_R$ with $B_r C_r >0$.\\
    \textbf{ Proof.} Appendix \ref{App: I}\\
\end{lemma}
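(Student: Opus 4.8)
The plan is to compute $H_\beta(s)=C_0(sI-\bar A)^{-1}B_0$ directly by exploiting the $2\times2$ block structure of $\bar A$ together with the fact that $B_0=[\,1\ \ 0_{1\times n_l}\,]^T$ selects only the first block-column of the resolvent. Writing $sI-\bar A$ with blocks $Q_1=s-A_r$, $Q_2=-B_rC_u$, $Q_3=-B_uC_r$ and $Q_4=sI-A-B_uD_rC_u$, I would apply the partitioned inversion of Lemma \ref{lem: Mat inv} to extract only the first block-column $[\,W\ \ Y\,]^T$ of $(sI-\bar A)^{-1}$. Since $C_0=[\,\varrho\ \ \beta C_e\,]$ and $Y=C_rQ_4^{-1}B_u\,W$, this collapses to the compact expression $H_\beta(s)=\bigl(\varrho+\beta C_r\,C_eQ_4^{-1}B_u\bigr)W$ with $W=\bigl(s-A_r-B_rC_r\,C_uQ_4^{-1}B_u\bigr)^{-1}$, so the whole task reduces to evaluating the two scalar quantities $C_uQ_4^{-1}B_u$ and $C_eQ_4^{-1}B_u$.

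Next I would strip out the reset feedthrough $D_r$ hidden in $Q_4=(sI-A)-B_uD_rC_u$ by invoking the Woodbury identity (Lemma \ref{lem: Woodbury}) with $K=sI-A$, $U=B_u$, $J=-D_r$, $V=C_u$. A short scalar manipulation then re-expresses both quantities in terms of the ``$D_r$-free'' loop transfer $T(s):=C_u(sI-A)^{-1}B_u$, yielding $C_uQ_4^{-1}B_u=T/(1-TD_r)$ and $C_eQ_4^{-1}B_u=C_e(sI-A)^{-1}B_u/(1-TD_r)$. The role of the Woodbury step is precisely to fold the $B_uD_rC_u$ correction back into the bare plant resolvent, which is what later makes the feedthrough $D_r$ appear cleanly in the final formula.

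The key modelling step is to identify these state-space transfers with the measurable FRFs. Tracing Fig. \ref{fig:Block diagram cl} under Assumption \ref{ass: assumption no direct feed through} — so that the only algebraic loop is the $D_r$ one already handled above — and using that $e_r$ is the shaping-filtered copy of $u_1$ (consistent with $C_e=C_u$, $D_e=D_u$ when $C_s=1$), I expect $C_u(sI-A)^{-1}B_u=-L/(1+LC_3)$ and $C_e(sI-A)^{-1}B_u=-LC_s/(1+LC_3)$, with $L=C_1C_2G$. Note that $A$ already contains the parallel element $C_3$ and the closed feedback loop, so $C_3$ (but neither $D_r$ nor the reset part $R-D_r$) sits inside these transfers. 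Substituting them and simplifying $1-TD_r$ gives $C_uQ_4^{-1}B_u=-L/[1+L(C_3+D_r)]$ and $C_eQ_4^{-1}B_u=-LC_s/[1+L(C_3+D_r)]$.

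Finally I would substitute $B_rC_r=(s-A_r)\bigl(R(s)-D_r\bigr)$, which follows from $R(s)-D_r=C_r(s-A_r)^{-1}B_r$, into $W$. The factor $1+L(C_3+D_r)$ then cancels between the prefactor and $W$, while the bracket $1+L(C_3+D_r)+L(R-D_r)$ telescopes to $1+L(R+C_3)$, producing the advertised denominator. Collecting terms, extracting the common $(R-D_r)$ factor, and inserting $\varrho'=\varrho/(C_rB_r)$ and $\beta'=-\beta/B_r$ reproduces \eqref{eq: H_b FRF} exactly. The matrix algebra of the first, second and fourth steps is essentially mechanical; the genuine obstacle is the third step, where the block-diagram identification of $C_u(sI-A)^{-1}B_u$ and $C_e(sI-A)^{-1}B_u$ — getting $C_3$ (and only $C_3$) into the closed loop, respecting $C_s$ on the triggering channel, and fixing the signs — must be argued carefully from the interconnection structure and Assumption \ref{ass: assumption no direct feed through}.
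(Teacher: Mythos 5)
Your proposal is correct and follows essentially the same route as the paper's Appendix~\ref{App: I}: the same block partition of $sI-\bar{A}$, Lemma~\ref{lem: Mat inv} to isolate $W$ and $Y$, Lemma~\ref{lem: Woodbury} to absorb the $B_uD_rC_u$ feedthrough, and the same identifications $C_u(sI-A)^{-1}B_u=-L/(1+LC_3)$ and $C_e(sI-A)^{-1}B_u=-LC_s/(1+LC_3)$. The only (cosmetic) difference is that you apply Woodbury once to $Q_4^{-1}$ and then work scalar-by-scalar, whereas the paper applies it to $W$ and again to the inner inverse; both yield $W=\tfrac{R-D_r}{C_rB_r}\cdot\tfrac{1+L(C_3+D_r)}{1+L(R+C_3)}$ and the stated formula.
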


Regarding the Theorem \ref{Th. Theorem1} it is crucial to determine the sign of the real part of $H_\beta(s)$ to prove SPRness. Thus, the Nyquist Stability Vector (NSV) is given by the following definition.\\

\begin{definition}
\label{def:NSV}
    For the transfer function in (\ref{eq: H_b FRF}), the NSV is defined as below ($\forall \,\omega \in \mathbb{R}$)
    \begin{equation}
    \begin{split}
    \stackrel{\rightarrow}{\mathcal{N}}(\omega)=&\begin{bmatrix}
        \mathcal{N}_x (\omega) & \mathcal{N}_y (\omega)
    \end{bmatrix}^T \\
    = &\begin{bmatrix}
        \mathfrak{R}(M_1^{*}(j\omega)M_2(j\omega)) & \mathfrak{R}(M_1^{*}(j\omega)M_3(j\omega))
    \end{bmatrix}^T,
    \label{eq.NSV}
    \end{split}
\end{equation}
where $ \mathfrak{R}(.)$ means the real part, $(.)^*$ means the complex conjugate, $j=\sqrt{-1}$, and
\begin{equation}
    \begin{split}
        M_1(j\omega)&=1+L(j\omega)\Bigl(R(j\omega)+C_3(j\omega)\Bigl), \\
        M_2(j\omega)&=L(j\omega)C_s(j\omega)\Big(R(j\omega)-D_r\Big), \\
        M_3(j\omega)&=\bigg(1+L(j\omega)\Big(C_3(j\omega)+D_r\Big)\bigg)\Bigl(R(j\omega)-D_r\Bigl).
    \end{split}
    \label{eq.M}
\end{equation}\\
\end{definition}

As previously stated, we aim to employ FRF-based conditions instead of matrix-based LMIs. Therefore, the following definitions are introduced as a foundation for the main theorem.
\begin{definition}
    \label{def:Types}
     We define $\theta_{\mathcal{N}}(\omega)=\phase{\overrightarrow{\mathcal{N}}(\omega)}$ where $\phase{\overrightarrow{\mathcal{N}}(\omega)}=\mathrm{tan}^{-1}\big({\frac{\mathcal{N}_y (\omega)}{\mathcal{N}_x (\omega)}}\big) \, \forall \,\omega \in \mathbb{R}$, and also
    \begin{equation}
\theta_1=\smash{\displaystyle\min_{\forall\omega \in\mathbb{R}^{}}}\theta_{\mathcal{N}}(\omega),  \quad \theta_2=\smash{\displaystyle\max_{\forall\omega \in\mathbb{R}^{}}}\theta_{\mathcal{N}}(\omega).
\label{eq: teta}
    \end{equation}
    In this paper for simplicity it is considered $\theta_{\mathcal{N}}(\omega)\in [-\frac{\pi}{2},  \frac{3\pi}{2})$.\\
\end{definition}

\begin{definition}
    \label{eq: LCs}
    The transfer functions $L(s)C_s(s)$, is defined as:
\begin{equation}
\label{eq: L(s)Cs}
     L(s)C_s(s)=\frac{K_{m} s^{m} + K_{m-1} s^{m-1} + \ldots + K_{m_0}}{K_{n} s^{n} + K_{n-1} s^{n-1} + \ldots + K_{n_0}}.
\end{equation}\\
\end{definition}
To be able to use Lemma \ref{lemma UBIBS} and Lemma \ref{convrgence} in the frequency domain, it is necessary to convert the conditions stated in Theorem \ref{Th. Theorem1} into frequency domain-based conditions. Therefore, with the presented lemmas and definitions, Theorem \ref{lem:Hb frequency} is introduced to establish the same property as stated in Theorem \ref{Th. Theorem1}, with conditions exclusively in the frequency domain.\\

\begin{theorem}
\label{lem:Hb frequency}
   The zero equilibrium of the reset control system (\ref{eq.SS closed loop}) satisfying the Assumption \ref{ass: assumption no direct feed through}, with $w(t) = 0$ is globally uniformly asymptotically stable if all the conditions listed below are satisfied
   \begin{itemize}
    \item The base linear system is stable, and its open loop transfer function does not have any unstable pole-zero cancellation.
    \item The shaping filter $C_s(s)$ is proper and stable.
    \item $-1<\gamma<1$.
    \item $B_r C_r >0$.
    \item $(\theta_2-\theta_1)<\pi$.
     \end{itemize}
     In the case of $\omega_r \neq 0$ (GFORE)
    \begin{itemize}
        \item $-\frac{\pi}{2}<\theta_{\mathcal{N}} (\omega)<\pi$ \text{and/or} $0<\theta_{\mathcal{N}} (\omega)<\frac{3\pi}{2}$, \quad for all $\omega \in [0,\infty)$.
    \end{itemize}
    In the case of $\omega_r=0$ (CI)
    \begin{itemize}
        \item The relative degree of the transfer function $L(s)$ must be 1.
         \item If $\lim_{s \to \infty} \operatorname{phase}\left(L(s) C_s(s)\right) = -90$ ($\frac{K_{n}}{K_{m}} > 0$), then $0 < \theta_{\mathcal{N}}(\omega) < \frac{3\pi}{2}$ for all $\omega \in [0,\infty)$.
    \item If $\lim_{s \to \infty} \operatorname{phase}\left(L(s) C_s(s)\right) = -270$ ($\frac{K_{n}}{K_{m}} < 0$), then $-\frac{\pi}{2} < \theta_{\mathcal{N}}(\omega) < \pi$ for all $\omega \in [0,\infty)$.
   \end{itemize}
   \textit{\textbf{Proof.}} Appendix \ref{App: II}\\
\end{theorem}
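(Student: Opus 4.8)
The plan is to reduce the frequency-domain conditions of Theorem~\ref{lem:Hb frequency} to the SPR condition of Theorem~\ref{Th. Theorem1} by way of the FRF-based expression \eqref{eq: H_b FRF} from Lemma~\ref{lem: Hb transfer}, and then invoke Lemma~\ref{lem: SPR} to translate SPRness into verifiable pointwise conditions on $H_\beta(j\omega)$. First I would observe that Theorem~\ref{Th. Theorem1} requires four things: stability of the base linear system (guaranteeing $\bar{A}$ is Hurwitz, which is exactly the first bullet together with the no-unstable-pole-zero-cancellation assumption that ensures no hidden unstable modes), the gain bound $-1\le\gamma\le1$ (covered by $-1<\gamma<1$), controllability/observability of $(\bar{A},B_0)$ and $(\bar{A},C_0)$ (which the pole-zero-cancellation condition supplies), and most importantly the SPRness of $H_\beta(s)$. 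So the core of the proof is showing that the listed angle conditions on $\overrightarrow{\mathcal{N}}(\omega)$ are equivalent to SPRness of the transfer function in \eqref{eq: H_b FRF}.

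The key algebraic step is to rewrite the real part of $H_\beta(j\omega)$ in terms of the NSV of Definition~\ref{def:NSV}. Writing $H_\beta(j\omega)=\dfrac{\beta' M_2(j\omega)+\varrho' M_3(j\omega)}{M_1(j\omega)}$, I would multiply numerator and denominator by $M_1^{*}(j\omega)$ so that the denominator becomes the positive real quantity $|M_1(j\omega)|^2$, and then take the real part. This yields
\begin{equation}
\nonumber
\mathfrak{R}\big(H_\beta(j\omega)\big)=\frac{\beta'\,\mathfrak{R}\big(M_1^{*}M_2\big)+\varrho'\,\mathfrak{R}\big(M_1^{*}M_3\big)}{|M_1(j\omega)|^2}=\frac{\beta'\mathcal{N}_x(\omega)+\varrho'\mathcal{N}_y(\omega)}{|M_1(j\omega)|^2}.
\end{equation}
Since the denominator is positive wherever $M_1\neq0$ (i.e.\ away from closed-loop poles of the base linear system, which are in the open left half-plane by the stability assumption), the sign of $\mathfrak{R}(H_\beta(j\omega))$ is governed by the sign of $\beta'\mathcal{N}_x(\omega)+\varrho'\mathcal{N}_y(\omega)$. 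The middle SPR condition of Lemma~\ref{lem: SPR}, namely $H_\beta(j\omega)+H_\beta^{T}(-j\omega)=2\mathfrak{R}(H_\beta(j\omega))>0$, then becomes the requirement that the inner product of the fixed vector $(\beta',\varrho')$ with the NSV $\overrightarrow{\mathcal{N}}(\omega)$ be strictly positive for all $\omega$. Geometrically, this is exactly the statement that there exists a half-plane (through the origin, with normal $(\beta',\varrho')$) containing the entire locus $\{\overrightarrow{\mathcal{N}}(\omega):\omega\in\mathbb{R}\}$ in its open interior, which is possible precisely when the angular spread of the NSV is strictly less than $\pi$, i.e.\ $\theta_2-\theta_1<\pi$. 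This is where the fifth bullet and the definitions of $\theta_1,\theta_2$ in Definition~\ref{def:Types} enter.

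Next I would handle the third SPR condition of Lemma~\ref{lem: SPR} concerning the behavior at $\omega\to\infty$, which is where the GFORE-versus-CI case split and the relative-degree requirement originate. For GFORE ($\omega_r\neq0$) the reset base function $R(s)$ is strictly proper, and I would check the high-frequency limit of $H_\beta(j\omega)$ to confirm that $H_\beta(\infty)+H_\beta^{T}(\infty)$ is either positive definite or satisfies the $\omega^2$ limit condition; the admissible angle windows $(-\tfrac{\pi}{2},\pi)$ and/or $(0,\tfrac{3\pi}{2})$ are precisely the sectors compatible with a valid choice of normal $(\beta',\varrho')$ together with the limiting phase of $\overrightarrow{\mathcal{N}}$. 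For CI ($\omega_r=0$) the integrator action forces a relative-degree constraint on $L(s)$ so that $H_\beta(s)$ remains proper and its high-frequency real part stays nonnegative; here the sign of $K_n/K_m$ determines the limiting phase of $L(s)C_s(s)$ and hence which of the two one-sided angle windows must be enforced. I expect the main obstacle to be precisely this high-frequency analysis: one must carefully compute $\lim_{\omega\to\infty}H_\beta(j\omega)$ (and, when the limit lies on the imaginary axis, the $\omega^2$-weighted limit of its real part) for each reset type, and verify that the stated angle windows are exactly the feasible region in $(\beta',\varrho')$-space for which both the pointwise positivity and the asymptotic SPR conditions hold simultaneously. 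Once SPRness is established via Lemma~\ref{lem: SPR}, the conclusion follows directly from Theorem~\ref{Th. Theorem1}.
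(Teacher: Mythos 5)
Your overall route is the same as the paper's: reduce to Theorem \ref{Th. Theorem1} through the FRF expression of Lemma \ref{lem: Hb transfer}, write $\mathfrak{R}(H_\beta(j\omega))=\big(\beta'\mathcal{N}_x(\omega)+\varrho'\mathcal{N}_y(\omega)\big)/|M_1(j\omega)|^2$, read the positivity condition as the existence of a vector $(\beta',\varrho')$ making a strictly positive inner product with the whole NSV locus (hence $\theta_2-\theta_1<\pi$ plus the stated sector windows, the windows being forced by $\varrho'>0$, i.e.\ $0<\theta_\xi<\pi$), and obtain the GFORE/CI split and the relative-degree-one requirement from the $\omega\to\infty$ part of Lemma \ref{lem: SPR}. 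That matches Steps 1 and 2 of the paper's proof in Appendix \ref{App: II}.

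The genuine gap is your treatment of controllability and observability of $(\bar{A},B_0)$ and $(\bar{A},C_0)$. You assert that the no-unstable-pole-zero-cancellation hypothesis ``supplies'' it, but that hypothesis concerns the open-loop transfer function of the base linear system and only serves to guarantee that $\bar A$ is Hurwitz (no hidden unstable modes); it says nothing about whether the numerator $\beta'M_2+\varrho'M_3$ and the denominator $M_1$ of $H_\beta$ share a common root, which is what minimality of the realization $(\bar A,B_0,C_0)$ actually requires. The paper devotes its Step 3 to exactly this point: if some admissible pair $(\beta_1',\varrho_1')$ produces a common root $a_0+jb_0$ while $M_2(a_0,b_0)\neq 0$ or $M_3(a_0,b_0)\neq 0$, one perturbs $\beta_1'\mapsto\beta_1'+\varepsilon$ while remaining inside the admissible phase interval $[\theta_2-\tfrac{\pi}{2},\,\theta_1+\tfrac{\pi}{2}]$ (nonempty precisely because $\theta_2-\theta_1<\pi$), which destroys the cancellation without breaking SPRness; and the remaining possibility $M_1=M_2=M_3=0$ is excluded by showing that $M_1=0$ together with $1+L(C_3+D_r)=0$ would force $R-D_r=\omega_k/(s+\omega_r)=0$, impossible for a strictly proper first-order term. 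Without some version of this argument your appeal to Theorem \ref{Th. Theorem1} is incomplete, since that theorem explicitly demands controllability of $(\bar A,B_0)$ and observability of $(\bar A,C_0)$.
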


\begin{corollary}
\label{col: new convergence}
Based on Lemma \ref{convrgence}, suppose that $w$ is a Bohl function and Theorem \ref{lem:Hb frequency} holds. Then, the reset control system \eqref{eq.SS closed loop} with zero initial condition and $C_u=C_e$ and $D_u=D_e$ is uniformly exponentially convergent.\\
\end{corollary}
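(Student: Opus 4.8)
The plan is to verify that, under the corollary's hypotheses, every premise of Lemma \ref{convrgence} is in force, and then to read off its conclusion. Lemma \ref{convrgence} delivers uniform exponential convergence once the following hold: Assumption \ref{assumption}, the strict bound $-1<\gamma<1$, all conditions of Theorem \ref{Th. Theorem1}, a zero initial condition for the reset state, the identity $C_s=1$, and a Bohl input $w$. Four of these are immediate. The zero initial condition and the Bohl property of $w$ are assumed outright; the bound $-1<\gamma<1$ is itself one of the listed conditions of Theorem \ref{lem:Hb frequency}; and the equalities $C_u=C_e$, $D_u=D_e$ make $e_r(t)=u_1(t)$ in \eqref{eq.SS linear}, which is exactly the statement $C_s=1$.

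The central step is to certify the remaining premise, namely that all conditions of Theorem \ref{Th. Theorem1} are satisfied. Here I would appeal to the fact that Theorem \ref{lem:Hb frequency} is, by design, a frequency-domain reformulation of Theorem \ref{Th. Theorem1}: its proof (Appendix \ref{App: II}) constructs, from the NSV angular conditions together with stability of the base linear system and $B_rC_r>0$, a pair $\varrho=\varrho^T>0$ and $\beta\in\mathbb{R}$ for which $H_\beta(s)$ in \eqref{eq.H beta} is SPR and for which $(\bar A,B_0)$ and $(\bar A,C_0)$ are controllable and observable. Thus whenever Theorem \ref{lem:Hb frequency} holds, the full hypothesis of Theorem \ref{Th. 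Theorem1} holds as well, supplying the one nontrivial premise of Lemma \ref{convrgence}.

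It remains to dispatch Assumption \ref{assumption}, which the corollary does not state explicitly; I would split into two cases. If the system exhibits infinitely many reset instants with $t_k\to\infty$, then Assumption \ref{assumption} holds and Lemma \ref{convrgence} applies directly, yielding uniform exponential convergence. If instead only finitely many resets occur, then beyond the last reset instant the dynamics coincide with the base linear system ($A_\rho=I$ in \eqref{eq.SS closed loop}); stability of this base linear system, with no unstable pole--zero cancellation, is among the conditions of Theorem \ref{lem:Hb frequency}, so the linear part is exponentially stable and, for any Bohl input, admits a unique bounded solution that is globally exponentially stable, meeting Definition \ref{def: convergence}. Either way the reset control system is uniformly exponentially convergent.

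The one genuinely substantive link is the bridging claim of the second paragraph, that the frequency-domain conditions really do furnish the SPR and controllability/observability properties of Theorem \ref{Th. Theorem1}; but this is precisely the content of the proof of Theorem \ref{lem:Hb frequency}, so for the corollary it can be cited rather than re-established, leaving only the routine finite-reset reduction to standard LTI convergence as a loose end to tie off.
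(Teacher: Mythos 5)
Your proposal is correct and follows essentially the same route the paper intends for this corollary, which is stated without an explicit proof precisely because it amounts to chaining Theorem \ref{lem:Hb frequency} (whose proof in Appendix \ref{App: II} supplies the SPR, controllability and observability conditions of Theorem \ref{Th. Theorem1}) into Lemma \ref{convrgence}, with $C_u=C_e$, $D_u=D_e$ playing the role of $C_s=1$ as noted in Remark \ref{proof explanation}. Your case split on Assumption \ref{assumption} is not a deviation but exactly the reduction the authors describe in the remark following that assumption (finitely many resets reduce the RCS to its stable base linear system), so nothing is missing.
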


Suppose that the base linear system is stable and $-1 < \gamma < 1$. Then, according to Theorem \ref{lem:Hb frequency}, to investigate the stability of a reset control system, it is only necessary to plot $\theta_{\mathcal{N}}(\omega)$ and check the conditions associated with it. This process only requires the measured FRF of the plant. In this regard, with respect to Lemma \ref{lemma UBIBS} and Lemma \ref{convrgence}, the UBIBS and uniformly exponentially convergent property of RCS in \eqref{eq.SS closed loop} can also be achieved in the frequency domain by satisfying the conditions in Theorem \ref{lem:Hb frequency} rather than those in Theorem \ref{Th. Theorem1}.

Furthermore, it can be observed that the final condition in the case of $\omega_r=0$ is overly conservative, requiring the sum of the relative degrees of the plant, pre-filter, and post-filter to be 1. This conservatism makes the use of this method for reset controllers with CI nearly impossible. In other words, designing the controller based on the GFORE element is the preferred choice, as it offers advantages in terms of both performance \cite{LukeIFAC2024} and stability.\\

\begin{remark}
    \label{delay_exp}
To analytically analyze the effect of delay on the stability conditions in Theorem \ref{lem:Hb frequency} while using different reset elements, we approximate the system delay using the Pade approximation. This is done with a sufficient number of stable poles to ensure a Hurwitz $H_\beta$ transfer function based on Lemma \ref{lem: SPR}. This approximation is then integrated into the dynamics of the linear components of the system, as described in \eqref{eq.SS linear}.
\end{remark}
\section{Effect of delay} \label{sec: delay}
Regarding the Definition \ref{def:NSV}, the system's plant ($G$) is a component of both $\mathcal{N}_x (\omega)$ and $\mathcal{N}_y (\omega)$ in the NSV. Considering a plant in the presence of a delay can bring about changes in $\theta_{\mathcal{N}} (\omega)$ and consequently in the stability status. Thus, in this section, it is investigated in which circumstances the delay could lead to a situation where demonstrating the stability of the closed-loop reset control system in \eqref{eq.SS closed loop} becomes unfeasible. The findings of this analysis are summarized in Corollary \ref{CI,GFORE}. To support the proof of this corollary, we introduce Definition \ref{Zl}, Lemma \ref{lemma: sum_mult Zl}, and Lemma \ref{NxNyZl} as supplementary tools.\\

\begin{definition}
    \label{Zl}
    The function $K(x)$ is called a $\mathcal{Z}_L$ function if it can be defined in the form $K(x)=g(x)h(x)$, where
    \begin{equation}
        \label{ZL function}
         \lim_{x\to\infty} |g(x)|=0,
    \end{equation}
    and $h(x)$ be an oscillating function with zero mean as follows
    \begin{equation}
        \label{ZL function h}
         h(x)=A\sin{(x+\phi)},
    \end{equation}
    with $A\in \mathbb{R}^{+}$ and $\phi \in [0,2\pi).$\\
\end{definition}

\begin{lemma}
    \label{lemma: sum_mult Zl}
    If $K_1(x_1)$ and $K_2(x_2)$ ($K_1(x_1)\neq K_2(x_2)$) are $\mathcal{Z}_L$ functions then $K_1(x_1)+K_2(x_2)$ and $K_1(x_1)\times K_2(x_2)$ are also $\mathcal{Z}_L$ functions.\\
    \textit{\textbf{Proof.}} Appendix \ref{App: III}\\
\end{lemma}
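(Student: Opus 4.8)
The plan is to substitute the defining factorisations from Definition \ref{Zl} directly and then reduce both the sum and the product to the canonical $\mathcal{Z}_L$ form, i.e.\ a factor that vanishes at infinity multiplying a bounded zero-mean oscillation. Write $K_1(x)=g_1(x)h_1(x)$ and $K_2(x)=g_2(x)h_2(x)$ with $\lim_{x\to\infty}|g_i(x)|=0$ and $h_i(x)=A_i\sin(x+\phi_i)$, and treat both as functions of a common argument $x$. The whole argument then rests on the two operative properties of a $\mathcal{Z}_L$ function: decay of the envelope and persistence of a bounded oscillation.

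For the product I would separate envelopes from oscillations, $K_1 K_2=(g_1 g_2)(h_1 h_2)$. The envelope satisfies $|g_1 g_2|=|g_1|\,|g_2|\to 0$, so it is a legitimate decaying factor. For the oscillatory part I would apply the product-to-sum identity to $h_1 h_2 = A_1 A_2\sin(x+\phi_1)\sin(x+\phi_2)=\tfrac{A_1 A_2}{2}\bigl[\cos(\phi_1-\phi_2)-\cos(2x+\phi_1+\phi_2)\bigr]$. The term $\cos(2x+\phi_1+\phi_2)$ is a bounded zero-mean oscillation of the required sinusoidal type (using $\cos=\sin(\cdot+\tfrac{\pi}{2})$), while the constant $\tfrac{A_1 A_2}{2}\cos(\phi_1-\phi_2)$, once multiplied by $g_1 g_2\to 0$, is a vanishing non-oscillatory remainder that is absorbed into the envelope. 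Hence $K_1 K_2$ is again a decaying factor times a bounded oscillation, so it is a $\mathcal{Z}_L$ function.

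For the sum, since $h_1$ and $h_2$ carry the same frequency I would combine them with the harmonic-addition identity $a_1\sin(x+\phi_1)+a_2\sin(x+\phi_2)=R\sin(x+\Phi)$, here with the $x$-dependent coefficients $a_i=A_i g_i(x)$, to obtain $K_1+K_2=R(x)\sin\bigl(x+\Phi(x)\bigr)$ with $R(x)=\sqrt{(A_1 g_1)^2+(A_2 g_2)^2+2A_1 A_2 g_1 g_2\cos(\phi_1-\phi_2)}$. The triangle bound $|K_1+K_2|\le A_1|g_1|+A_2|g_2|\to 0$ shows $R(x)\to 0$, so $R$ plays the role of the decaying envelope and $\sin(x+\Phi(x))$ that of the bounded zero-mean oscillation, matching Definition \ref{Zl}.

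The hard part will be reconciling these combined expressions with the rigid template $g(x)\,A\sin(x+\phi)$ of Definition \ref{Zl}: the product introduces a doubled oscillation frequency together with a non-oscillatory vanishing remainder, and the sum yields an amplitude $R(x)$ and phase $\Phi(x)$ that both depend on $x$. The crux is therefore to argue that the essential structure of a $\mathcal{Z}_L$ function is preserved under these manipulations, with the decay of the envelopes --- established via submultiplicativity of the absolute value for the product and the triangle inequality for the sum --- doing the real work, since that decay is exactly the property consumed downstream in Lemma \ref{NxNyZl}. I expect the hypothesis $K_1\neq K_2$ to be used only as a mild non-degeneracy condition, excluding cases such as $K_1=K_2$ in which the product collapses to a square with a non-vanishing-mean structure and the oscillatory content would degenerate.
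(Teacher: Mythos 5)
Your overall strategy --- harmonic addition for the sum and the product-to-sum identity for the product, with the decay of the envelopes carrying the whole argument --- is the same as the paper's proof in Appendix~\ref{App: III}, which uses exactly the identity $g_1\sin x_1+g_2\sin x_2=\sqrt{Q^2+R^2}\,\sin(P+\Phi)$ and then reduces the product to the sum via $\sin x_1\sin x_2=\tfrac12\bigl(\cos(x_1-x_2)-\cos(x_1+x_2)\bigr)$. The one substantive divergence is that you collapse the two arguments $x_1,x_2$ to a common $x$, and that is where a genuine gap opens in the product case. With a common argument, $h_1h_2=\tfrac{A_1A_2}{2}\bigl[\cos(\phi_1-\phi_2)-\cos(2x+\phi_1+\phi_2)\bigr]$ contains the constant $\cos(\phi_1-\phi_2)$, which is nonzero except when the phases differ by exactly $\pi/2$; the resulting function $g_1g_2\cdot\bigl(c-\cos(2x+\psi)\bigr)$ is an oscillation about a \emph{nonzero} mean, not a zero-mean sinusoid, and in the extreme case $\phi_1=\phi_2$ it equals $g_1g_2\bigl(1-\cos(2x+2\phi_1)\bigr)$, whose oscillatory factor never changes sign. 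So the product is then not a $\mathcal{Z}_L$ function in the sense of Definition~\ref{Zl}, and ``absorbing the constant into the envelope'' does not repair this: the sign-changing property is precisely what Lemma~\ref{NxNyZl} consumes downstream. The defect is also not confined to $K_1=K_2$ as you suggest; it occurs whenever $\phi_1\equiv\phi_2\pmod{\pi}$ under a common argument. The paper sidesteps it by keeping $x_1$ and $x_2$ distinct, so that $\cos(x_1-x_2)$ is itself an oscillation and the product reduces to a sum of two $\mathcal{Z}_L$ terms, to which the first part of the lemma applies.

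Two further remarks. Your sum argument is fine and in fact slightly cleaner than the paper's (the triangle inequality gives the decay of $R(x)$ directly), though it only covers the equal-argument case, whereas the paper's half-angle identity covers $x_1\neq x_2$. And the difficulty you flag about the $x$-dependent phase $\Phi(x)$ not fitting the rigid template $A\sin(x+\phi)$ of Definition~\ref{Zl} is real, but the paper's own proof has exactly the same looseness (its $\Phi=\tan^{-1}(Q/R)$ depends on $x_1,x_2$), so you are not behind the paper on that point; what matters for the rest of the paper is a vanishing envelope multiplying a bounded, persistently sign-changing oscillation.
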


\begin{lemma}
    \label{NxNyZl}
    In the presence of the delay, if both $\mathcal{N}_x (\omega)$ and $\mathcal{N}_y (\omega)$ in the NSV are $\mathcal{Z}_L$ functions, it is not possible to demonstrate the stability of the reset control system \eqref{eq.SS closed loop} using Theorem \ref{lem:Hb frequency}. 
\end{lemma}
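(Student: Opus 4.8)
The plan is to derive a contradiction with the necessary condition $(\theta_2-\theta_1)<\pi$ that appears among the hypotheses of Theorem~\ref{lem:Hb frequency}. Since every stability certificate produced by that theorem requires this strict spread bound on the phase $\theta_{\mathcal{N}}(\omega)$, it suffices to show that whenever both $\mathcal{N}_x$ and $\mathcal{N}_y$ are $\mathcal{Z}_L$ functions one necessarily has $\theta_2-\theta_1\geq\pi$, so that no admissible choice of the free parameters $\varrho'$, $\beta'$ can meet the theorem's conditions.

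First I would write the two components in the $\mathcal{Z}_L$ form supplied by Definition~\ref{Zl}, namely $\mathcal{N}_x(\omega)=g_x(\omega)\,A_x\sin(\omega+\phi_x)$ and $\mathcal{N}_y(\omega)=g_y(\omega)\,A_y\sin(\omega+\phi_y)$, where $g_x,g_y$ have vanishing modulus at infinity while the sinusoidal factors carry the persistent oscillation induced by the delay factor $e^{-j\omega\tau}$. The essential observation is that the decaying envelopes $g_x,g_y$ only shrink the amplitudes; they do not remove the zeros forced by the sine factors. Hence, as $\omega\to\infty$, each of $\mathcal{N}_x$ and $\mathcal{N}_y$ changes sign infinitely often, and the vector $\overrightarrow{\mathcal{N}}(\omega)$ keeps rotating even though its length tends to zero, so that its argument $\theta_{\mathcal{N}}(\omega)=\phase{\overrightarrow{\mathcal{N}}(\omega)}$ never settles.

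Next I would pin down the spread of $\theta_{\mathcal{N}}$ by evaluating it along the zeros of $\mathcal{N}_x$. At the points where $\omega+\phi_x=n\pi$ one has $\mathcal{N}_x=0$, while $\mathcal{N}_y$ is proportional to $(-1)^n\sin(\phi_y-\phi_x)$; provided $\phi_x\not\equiv\phi_y\ (\mathrm{mod}\ \pi)$ these values alternate in sign as $n$ grows, forcing $\theta_{\mathcal{N}}$ to take values arbitrarily close to both $+\tfrac{\pi}{2}$ and $-\tfrac{\pi}{2}$ (equivalently $\tfrac{3\pi}{2}$ under the convention $\theta_{\mathcal{N}}\in[-\tfrac{\pi}{2},\tfrac{3\pi}{2})$ of Definition~\ref{def:Types}). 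This already yields $\theta_2-\theta_1\geq\pi$. In the degenerate in-phase or anti-phase case $\phi_x\equiv\phi_y\ (\mathrm{mod}\ \pi)$, the two components share their zeros and $\overrightarrow{\mathcal{N}}(\omega)$ alternates between the antipodal directions $\arctan(g_yA_y/g_xA_x)$ and that angle plus $\pi$ as $\sin(\omega+\phi_x)$ changes sign, again giving a spread of exactly $\pi$. Either way $\theta_2-\theta_1\geq\pi$, contradicting the strict inequality required by Theorem~\ref{lem:Hb frequency}, and the claim follows.

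I expect the main obstacle to be the rigorous treatment of this phase-spanning step: one must argue carefully, under the branch convention $[-\tfrac{\pi}{2},\tfrac{3\pi}{2})$, that the sign alternation of the two oscillating components translates into a genuine angular spread of at least $\pi$ rather than an artifact of the $\arctan$ branch, and one must rule out the non-generic situation in which a decaying envelope happens to coincide with the sine zeros. Isolating the in-phase degenerate case and showing it still produces an antipodal, hence $\pi$-wide, swing is the part that demands the most care.
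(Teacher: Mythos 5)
Your proof is correct and lands on the same underlying idea as the paper's --- the persistent oscillation injected by the delay forces the NSV phase to sweep a range incompatible with Theorem~\ref{lem:Hb frequency} --- but your execution is noticeably more careful, and the comparison is instructive. The paper's own proof is two sentences: since $\mathcal{N}_x$ and $\mathcal{N}_y$ each take both signs as $\omega\to\infty$, it \emph{asserts} that $\overrightarrow{\mathcal{N}}(\omega)$ spans all four quadrants, so $\theta_{\mathcal{N}}$ takes all values in $(0,2\pi]$ and the sector constraints fail. That inference is not literally valid: two sign-changing components could be proportional, in which case the vector only visits two antipodal directions and a sector condition such as $0<\theta_{\mathcal{N}}(\omega)<\tfrac{3\pi}{2}$ could still hold pointwise (e.g.\ $\theta_{\mathcal{N}}\in\{\tfrac{\pi}{4},\tfrac{5\pi}{4}\}$). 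Your argument closes exactly this hole by targeting the spread condition $(\theta_2-\theta_1)<\pi$ instead of the sector conditions: sampling $\mathcal{N}_y$ at the zeros of $\mathcal{N}_x$ gives values of $\theta_{\mathcal{N}}$ at both $\tfrac{\pi}{2}$ and $-\tfrac{\pi}{2}$ in the generic case, and the collinear case still yields an antipodal swing of exactly $\pi$, violating the strict inequality either way. The one residual technicality, which you flag yourself, is that the sign alternation of $\mathcal{N}_y(\omega_n)\propto g_y(\omega_n)(-1)^n\sin(\phi_y-\phi_x)$ presumes the envelope $g_y$ does not flip sign in sync with $(-1)^n$ along the sampled zeros; for envelopes built from rational $L$, $C_s$, $C_3$ this is benign, and the paper's own proof does not address it either. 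Net effect: same conclusion, but your route via the spread bound is the more robust of the two.
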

\begin{proof}
    \label{P.NxNyZl}
    When both $\mathcal{N}_x (\omega)$ and $\mathcal{N}_y (\omega)$ are $\mathcal{Z}_L$ functions, it means they take both positive and negative values as $\omega \rightarrow \infty$, which implies that $\stackrel{\rightarrow}{\mathcal{N}}(\omega)=\begin{bmatrix}
        \mathcal{N}_x (\omega) & \mathcal{N}_y (\omega)
    \end{bmatrix}^T$ spans all four quadrants. Therefore, $\theta_{\mathcal{N}} (\omega)$ takes all values in $(0,2\pi]$, and non of the constraints on $\theta_{\mathcal{N}} (\omega)$ in Theorem \ref{lem:Hb frequency} cannot be satisfied.\\
\end{proof}

\begin{corollary}
    \label{CI,GFORE}
    For a system with delay, Theorem \ref{lem:Hb frequency} cannot be employed to establish the stability of the RCS in \eqref{eq.SS closed loop} with CI ($\omega_r=0$). However, it can be utilized to demonstrate the stability of a reset control system with a GFORE element ($\omega_r\neq0$).
\end{corollary}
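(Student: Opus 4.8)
The plan is to classify the two components $\mathcal{N}_x(\omega)$ and $\mathcal{N}_y(\omega)$ of the NSV (Definition \ref{def:NSV}) in the presence of a transport delay and then invoke Lemma \ref{NxNyZl}. Writing the delayed loop gain as $L(j\omega)=L_0(j\omega)e^{-j\omega\tau}$ with $\tau>0$, Assumption \ref{ass: assumption no direct feed through} forces the relative degree $r_L$ of $L$ to satisfy $r_L\ge1$, so $|L_0(j\omega)|\to0$, while the factor $e^{-j\omega\tau}=\cos(\omega\tau)-j\sin(\omega\tau)$ oscillates with a bounded, zero-mean amplitude that never decays. Hence both the real and imaginary parts of $L(j\omega)$ have the product form of Definition \ref{Zl} (a decaying magnitude multiplying a sinusoid) and are $\mathcal{Z}_L$ functions.

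First I would dispatch $\mathcal{N}_x$. Since $M_2(j\omega)=L(j\omega)C_s(j\omega)(R(j\omega)-D_r)$ carries the factor $L$, it inherits the delay oscillation while its magnitude vanishes (both $|L|\to0$ and $|R-D_r|\to0$); together with $M_1(j\omega)\to1$ as $\omega\to\infty$, the dominant, sign-determining contribution to $\mathcal{N}_x=\mathfrak{R}(M_1^{*}M_2)$ is $\mathfrak{R}(M_2)$, a $\mathcal{Z}_L$ function by Definition \ref{Zl} and Lemma \ref{lemma: sum_mult Zl}, while the remaining term (carrying $L^{*}L=|L|^{2}$, hence no delay oscillation) decays strictly faster. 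Thus $\mathcal{N}_x$ changes sign infinitely often as $\omega\to\infty$, and this conclusion is identical for CI and GFORE because $M_2$ always contains $L$.

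The decisive step concerns $\mathcal{N}_y=\mathfrak{R}(M_1^{*}M_3)$. Expanding $M_1^{*}M_3=\bigl[1+L^{*}(R+C_3)^{*}\bigr]\bigl[1+L(C_3+D_r)\bigr](R-D_r)$ produces exactly one delay-free term, namely $(R-D_r)$; every other term carries a factor $L$ or $L^{*}$, hence the delay, and is $\mathcal{Z}_L$ (combined through Lemma \ref{lemma: sum_mult Zl}). The behaviour of $\mathcal{N}_y$ therefore hinges on $\mathfrak{R}(R-D_r)=\mathfrak{R}\!\left(\frac{\omega_k}{j\omega+\omega_r}\right)=\frac{\omega_k\omega_r}{\omega_r^{2}+\omega^{2}}$. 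For CI ($\omega_r=0$) this vanishes identically, so $\mathcal{N}_y$ reduces to a sum of $\mathcal{Z}_L$ functions and is itself $\mathcal{Z}_L$; since $\mathcal{N}_x$ is also $\mathcal{Z}_L$, Lemma \ref{NxNyZl} shows the stability test necessarily fails. For GFORE ($\omega_r\neq0$) this term is strictly positive, of order $\omega^{-2}$, and does not oscillate, so it anchors the sign of $\mathcal{N}_y$; then $\mathcal{N}_y$ is not a $\mathcal{Z}_L$ function, the hypothesis of Lemma \ref{NxNyZl} fails, the obstruction disappears, and Theorem \ref{lem:Hb frequency} can still be satisfied.

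The main obstacle I anticipate is the rate comparison inside $\mathcal{N}_y$ for the GFORE case: the surviving positive term decays like $\omega^{-2}$, whereas the leading oscillating ($\mathcal{Z}_L$) terms decay like $\omega^{-(1+r_L)}$. I would use $r_L\ge1$ (guaranteed by Assumption \ref{ass: assumption no direct feed through}) to ensure the oscillating contributions decay at least as fast as the anchoring term, so that for all sufficiently large $\omega$ the sign of $\mathcal{N}_y$ is governed by the strictly positive $\frac{\omega_k\omega_r}{\omega_r^{2}+\omega^{2}}$ and no infinite sequence of sign changes occurs. This is precisely the property that distinguishes GFORE from CI and prevents $\mathcal{N}_y$ from being $\mathcal{Z}_L$, which is what makes the $H_\beta$ assessment feasible for GFORE and infeasible for CI under delay.
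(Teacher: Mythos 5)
Your proposal is correct and follows essentially the same route as the paper: it classifies $\mathcal{N}_x$ and $\mathcal{N}_y$ as $\mathcal{Z}_L$ functions via Lemma \ref{lemma: sum_mult Zl}, isolates the single delay-free contribution $\mathfrak{R}(R-D_r)=\frac{\omega_k\omega_r}{\omega^2+\omega_r^2}$ in $\mathcal{N}_y$ as the term that vanishes for CI but anchors the sign for GFORE, and concludes via Lemma \ref{NxNyZl}. Your explicit rate comparison in the GFORE case is a small refinement the paper leaves implicit, but the argument is otherwise the same.
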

\begin{proof}
    We will demonstrate that in the case of CI, both $\mathcal{N}_x (\omega)$ and $\mathcal{N}_y (\omega)$ are $\mathcal{Z}_L$ functions but in the case of GFORE, only $\mathcal{N}_x (\omega)$ is a $\mathcal{Z}_L$ function. Therefore, using Lemma \ref{NxNyZl}, the conditions in Theorem \ref{lem:Hb frequency} cannot be satisfied if the system contains a CI element.\\
Here we consider the delay function as 
\begin{equation}
\label{Euler}
    D(s,T)=e^{-Ts},
\end{equation}
with a delay time of $T\in  \mathbb{R}_{>0}$. Also, we can write 
\begin{equation}
    D(\omega,T)=\cos{(\omega T)}-j\sin{(\omega T)}.
\end{equation}
Thus, the plant in the presence of the delay can be expressed as follows:
\begin{equation}
\label{Gd}
G_d(j\omega)=G(j\omega)D(\omega,T),
\end{equation}
and, subsequently,
\begin{equation}
\label{Ld}
    L_d(j\omega)=L(j\omega)D(\omega,T).
\end{equation}
Therefore, equations in (\ref{eq.M}) can be rewritten as below
\begin{equation}
    \begin{split}
        M_1(j\omega)&=1+\bigg(L(j\omega) D(\omega,T)\Bigl(R(j\omega)+C_3(j\omega)\Bigl)\bigg), \\
        M_2(j\omega)&=L(j\omega) D(\omega,T)C_s(j\omega)\Big(R(j\omega)-D_r\Big), \\
        M_3(j\omega)&=\bigg(1+L(j\omega) D(\omega,T)\Big(C_3(j\omega)+D_r\Big)\bigg)...\\
        &\Bigl(R(j\omega)-D_r\Bigl),
    \end{split}
    \label{eq.M new}
\end{equation}
where based on the Definition \ref{def:NSV}, in order to derive $\mathcal{N}_x (\omega)$ and $\mathcal{N}_y (\omega)$, it is needed to calculate $M_1^*(j\omega)$, $M_2(j\omega)$, and $M_3(j\omega)$.\\
To calculate $M_1^*(j\omega)$, consider
\begin{equation}
\label{eq: a1b1}
    L(j\omega)\Bigl(R(j\omega)+C_3(j\omega)\Bigl)=a_1(\omega)+jb_1(\omega),
\end{equation}
where because $L(j\omega)$ is a strictly proper transfer function, $\lim_{\omega\to\infty} |a_1(\omega)|=0$, and $\lim_{\omega\to\infty} |b_1(\omega)|=0$. Hence,
\begin{equation}
\begin{split}
\label{eq: E1E2}
    &L(j\omega)D(\omega,T)\Bigl(R(j\omega)+C_3(j\omega)\Bigl)\\
    &=\Big(a_1(\omega)+jb_1(\omega)\Big)\Big(\cos{(\omega T)}-j\sin{(\omega T)}\Big) \\
    &=\Big(a_1(\omega)\cos{(\omega T)}+b_1(\omega)\sin{(\omega T)}\Big)\\
    &+j\Big(-a_1(\omega)\sin{(\omega T)}+b_1(\omega)\cos{(\omega T)}\Big)\\
    &=E_1(\omega)+jF_1(\omega),
    \end{split}
\end{equation}
thus,
\begin{equation}
    \label{eq: M1*}
    M_1^{*}(j\omega)=1+E_1(\omega)+jF_1(\omega),
\end{equation}
where regarding the Definition \ref{Zl}, $E_1(\omega)$ and $F_1(\omega)$ are $\mathcal{Z}_L$ functions.\\
To calculate $M_2(j\omega)$ it is considered
\begin{equation}
    \label{eq: M2}
     L(j\omega)C_s(j\omega)=a_2(\omega)+jb_2(\omega),
\end{equation}
and from \eqref{RCS expression},
\begin{equation}
    \label{eq: M2'}
    \begin{split}
     R(j\omega)-D_r&=\frac{\omega_k}{j\omega+\omega_r} \\
     &=\frac{\omega_k \omega_r}{\omega^2+\omega_r ^{2}}-j\frac{\omega_k\omega}{\omega^2+\omega_r ^{2}}=a_3(\omega)-jb_3(\omega).
     \end{split}
\end{equation}
Therefore,
 \begin{equation}
    \label{eq: M2''}
    \begin{split}
    &L(j\omega)C_s(j\omega)\Big(R(j\omega)-D_r\Big)\\
     &=\Big(a_2(\omega)a_3(\omega)+b_2(\omega)b_3(\omega)\Big) \\
     &+j\Big(b_2(\omega)a_3(\omega)-b_3(\omega)a_2(\omega)\Big)=E_2(\omega)+jF_2(\omega),
     \end{split}
\end{equation}
where $\lim_{\omega\to\infty} |E_2(\omega)|=0$, and $\lim_{\omega\to\infty} |F_2(\omega)|=0$. Thus, for $M_2(j\omega)$ it gives,
\begin{equation}
    \label{eq: M2'''}
    \begin{split}
      M_2(j\omega)&=L(j\omega)C_s(j\omega)\Big(R(j\omega)-D_r\Big)D(\omega,T)\\
     &=\Big(E_2(\omega)+jF_2(\omega)\Big)\Big(\cos{(\omega T)}-j\sin{(\omega T)}\Big)\\
     &=\Big(E_2(\omega)\cos{(\omega T)}+F_2(\omega)\sin{(\omega T)}\Big)\\
    &+j\Big(F_2(\omega)\cos{(\omega T)}-E_2(\omega)\sin{(\omega T)}\Big)\\
    &=E_3(\omega)+jF_3(\omega),
     \end{split}
\end{equation}
where $E_3(\omega)$ and $F_3(\omega)$ are $\mathcal{Z}_L$ functions.

For $M_3(\omega)$, first similar to what has been calculated in (\ref{eq: a1b1}) and (\ref{eq: E1E2}), we have
 \begin{equation}
 \label{M3'}
     \begin{split}
         &1+L(j\omega) D(\omega,T)\Big(C_3(j\omega)+D_r\Big)\\
         &=1+E_4(\omega)+jF_4(\omega),
     \end{split}
 \end{equation}
 where $E_4(\omega)$ and $F_4(\omega)$ are also $\mathcal{Z}_L$ functions. Then, considering
 \begin{equation}
     \begin{split}
     \label{eq: M''''}
         M_3(j\omega)&=\bigg(1+L(j\omega) D(\omega,T)\Big(C_3(j\omega)+D_r\Big)\bigg)...\\
         &\Bigl(R(j\omega)-D_r\Bigl),
              \end{split}
 \end{equation}
 and replacing \eqref{eq: M2'} and \eqref{M3'} in \eqref{eq: M''''}, results in
 \begin{equation}
     \begin{split}
     \label{eq: M'''''}
         M_3(j\omega)&=\Big(1+E_4(\omega)+jF_4(\omega)\Big)\Big(a_3(\omega)-jb_3(\omega)\Big)\\
         &=\bigg(a_3(\omega)+E_4(\omega)a_3(\omega)+F_4(\omega
         )b_3(\omega)\bigg)\\
         &+j\bigg(-b_3(\omega)-E_4(\omega)b_3(\omega)+F_4(\omega) a_3(\omega)\bigg)\\
         &=\Big(a_3(\omega)+E_5(\omega)\Big)-j\Big(b_3(\omega)+F_5(\omega)\Big),
     \end{split}
 \end{equation}
where $E_5(\omega)$ and $F_5(\omega)$ are $\mathcal{Z}_L$ functions.

Now we are able to calculate $\mathcal{N}_x (\omega)$ and $\mathcal{N}_y (\omega)$. From (\ref{eq.NSV}), we have
\begin{equation}
    \mathcal{N}_x(\omega)=\mathfrak{R}(M_1^{*}(j\omega)M_2(j\omega)),
\end{equation}
where, 
\begin{equation}
\begin{split}
    &M_1^{*}(j\omega)M_2(j\omega)\\
    &=\Big(1+E_1(\omega)+jF_1(\omega)\Big)\Big(E_3(\omega)+jF_3(\omega)\Big)\\
    &=\bigg(E_3(\omega)+E_1(\omega)E_3(\omega)-F_1(\omega)F_3(\omega)\bigg)\\
         &+j\bigg(F_1(\omega)E_3(\omega)+F_3(\omega)+E_1(\omega)F_3(\omega)\bigg)\\
         &= E_6(\omega)+jF_6(\omega),
    \end{split}
    \end{equation}
which $E_6(\omega)$ and $F_6(\omega)$ are $\mathcal{Z}_L$ functions. Thus,
\begin{equation}
\begin{split}
    \mathcal{N}_x (\omega)=\mathfrak{R}\Big(E_6(\omega)+jF_6(\omega)\Big)=E_6(\omega),
    \end{split}
\end{equation}
then, it concludes that $\mathcal{N}_x (\omega)$ is always a $\mathcal{Z}_L$ function.

For $\mathcal{N}_y (\omega)$, we have
\begin{equation}
    \mathcal{N}_y (\omega)=\mathfrak{R}(M_1^{*}(j\omega)M_3(j\omega)),
\end{equation}
where, 
\begin{equation}
\begin{split}
    &M_1^{*}(j\omega)M_3(j\omega)\\
    &=\bigg(1+E_1(\omega)+jF_1(\omega)\bigg)\bigg(\Big(a_3(\omega)+E_5(\omega)\Big)\\
    &-j\Big(b_3(\omega)+F_5(\omega)\Big)\bigg)\\
    &=\Big(a_3(\omega)+E_5(\omega)+E_1(\omega)a_3(\omega)+E_1(\omega)E_5(\omega)\\
   &+F_1(\omega)b_3(\omega)+F_1(\omega)F_5(\omega)\Big)+j\Big(F_1(\omega)a_3(\omega)\\
   &+F_1(\omega)E_5(\omega)-b_3(\omega)-F_5(\omega)-E_1(\omega)b_3(\omega)\\
   &-E_1(\omega)F_5(\omega)\Big)\\
   &=\Big(a_3(\omega)+E_7(\omega)\Big)+j\Big(-b_3(\omega)+F_7(\omega)\Big),
    \end{split}
\end{equation}
where $E_7(\omega)$ and $F_7(\omega)$ are $\mathcal{Z}_L$ functions. Therefore,
\begin{equation}
\begin{split}
    \mathcal{N}_y (\omega)=\mathfrak{R}\Big(M_1^{*}(j\omega)M_3(j\omega)\Big)=a_3(\omega)+E_7(\omega),
    \end{split}
\end{equation}
by replacing $a_3(\omega)$ from (\ref{eq: M2'}), we have
\begin{equation}
\begin{split}
    \mathcal{N}_y (\omega)=\frac{\omega_k \omega_r}{\omega^2+\omega_r ^{2}}+E_7(\omega).
    \end{split}
\end{equation}
Therefore, for the case $C_R=\text{CI}$ where $\omega_r= 0$, $\mathcal{N}_y (\omega)=E_7(\omega)$, which is a $\mathcal{Z}_L$ function. Thus, based on Lemma \ref{NxNyZl}, it is not possible to determine the stability of the reset control system \eqref{eq.SS closed loop} in the presence of delay by using Theorem \ref{lem:Hb frequency}.
However, in the case where $\omega_r\neq 0$ ($C_R=\text{GFORE}$), $\mathcal{N}_y$ is not a $\mathcal{Z}_L$ function and it is possible to apply Theorem \ref{lem:Hb frequency} to demonstrate the stability of a reset control system with GFORE element. To enhance comprehension, Fig. \ref{fig:NSV delay} depicts an example where both $\mathcal{N}_x$ and $\mathcal{N}_y$ are $\mathcal{Z}_L$ functions ($\omega_r= 0$), as well as a scenario where only $\mathcal{N}_x$ is a $\mathcal{Z}_L$ function ($\omega_r\neq 0$).
\end{proof}

    \begin{figure}[!t]
\centering
\subfloat[]{\includegraphics[width=0.5\columnwidth]{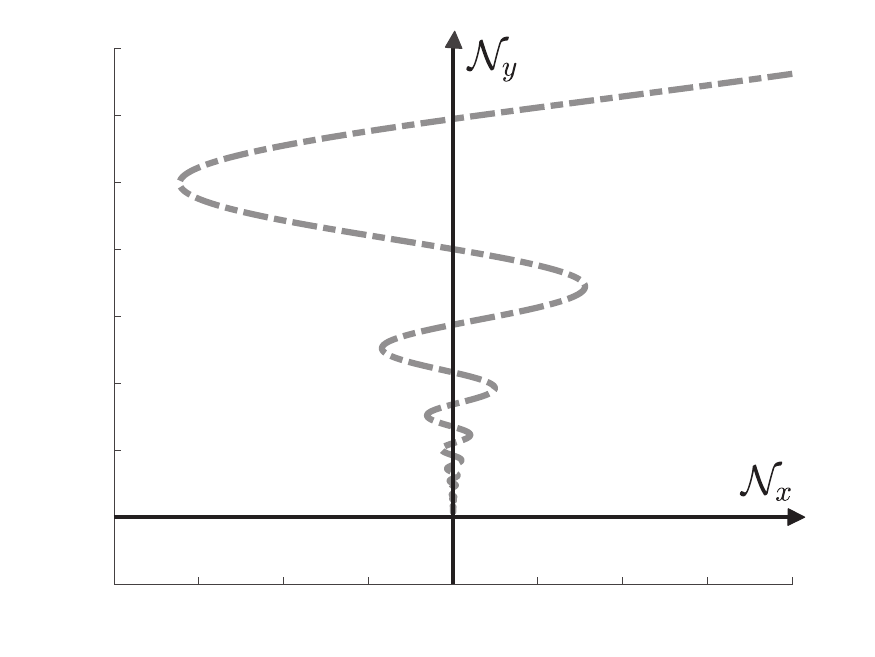}%
\label{fig:Nx=0}}
\subfloat[]{\includegraphics[width=0.5\columnwidth]{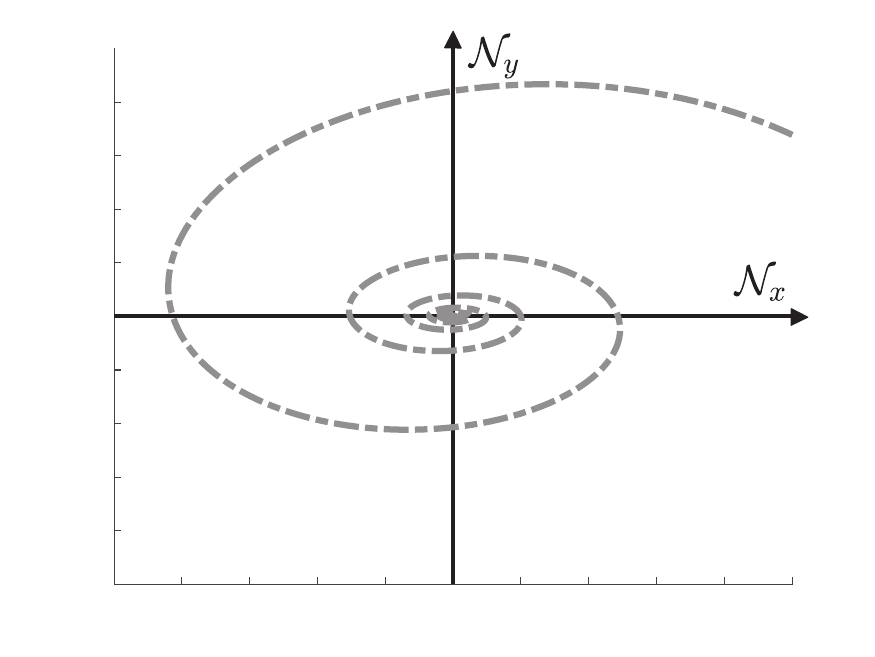}}%
\label{fig:nxny=0}
\caption{NSV plot ($\stackrel{\rightarrow}{\mathcal{N}}(\omega)$) in the presence of delay at high frequencies ($\omega \rightarrow \infty$), for (a) $\omega_r\neq 0$, and (b) $\omega_r$= 0.}
\label{fig:NSV delay}
\end{figure}
\section{Illustrative examples} \label{sec: example}
In this section, we test the validity of the proposed method for assessing the stability of a general RCS using the frequency response of its components. First, we evaluate the stability of a mass-spring-damper system, both with and without delay, to validate the results presented in Section \ref{sec: delay}. Additionally, we analyze the stability of an industrial precision positioning system, where only the FRF of the plant is available.

The closed-loop structure for both cases is the same as Fig. \ref{fig:Block diagram cl}, where
\begin{equation}
\begin{split}
    \label{eq:controllers}
    &C_R=\text{GFORE}\,(A_r=-\omega_r, B_r=1, C_r=\omega_k, D_r=0), \\
   &C_2(s)=k_p\;\omega_{i}\bigg(k_{g}+\frac{1}{s}\bigg)\bigg(\frac{\frac{s}{\omega_d}+1}{\frac{s}{\omega_t}+1}\bigg),\, C_1(s)=1,  \\
   &C_3(s)=\frac{s}{\Big(k_{g} s+1\Big)\omega_{i}},\quad C_s(s)=1,
\end{split}
\end{equation}
and
\begin{equation}
    k_{g}=\frac{1}{\omega_r|1+\frac{4j}{\pi}\frac{1-\gamma}{1+\gamma}|}.
\end{equation}
More details about the controller design and parameters tuning can be found in \cite{Yixuan}.

\subsection{Mass-spring-damper system}
This example demonstrates the effectiveness and repeatability of the stability method in a general form. The transfer function of a mass-spring-damper (MSD) system with transport delay is given by

\begin{equation}
\label{eq: M_S_D}
G(s) = \frac{\omega_n^2}{s^2 + 2\zeta\omega_n s + \omega_n^2} D(s, T)
\end{equation}
where $\zeta = 0.2$, $\omega_n = 30$, and $D(s, T)$ represents a time delay of $T$. Two scenarios are examined: one with a delay of $T = 0.0015\,$sec and one without any delay ($T = 0$). The controller structure remains consistent with the one presented in \eqref{eq:controllers}, and the corresponding parameters are listed in Table \ref{tab: parameters}.

In this example, the base linear system is stable, with $-1 < \gamma < 1$. Fig. \ref{fig: theta_N MSD} illustrates $\theta_{\mathcal{N}}(\omega)$ for both scenarios: with and without delay. It is clear that $-\frac{\pi}{2} < \theta_{\mathcal{N}}(\omega) < \pi$ and $(\theta_2 - \theta_1) < \pi$. Also, it can be seen that the delay only adds oscillation around $\theta_{\mathcal{N}}(\omega)=\frac{\pi}{2}$ ($\mathcal{N}_x (\omega)=0$) which means only $\mathcal{N}_x (\omega)$ becomes $\mathcal{Z}_L$ function and still we can assess the stability for this GFORE-based RCS like the result in Corollary \ref{CI,GFORE}. 

Consequently, based on Theorem \ref{lem:Hb frequency} and Lemma \ref{lemma UBIBS}, the closed-loop system in this case qualifies as a UBIBS stable reset control system, irrespective of the presence or absence of delay. Additionally, according to Lemma \ref{convrgence} and considering the zero initial condition for the designed reset controller, the RCS in this example has a uniformly exponentially convergent solution.

Note that here, for both cases (with and without delay), the same controller parameters are considered to be able to observe only the effect of delay. The difference in the mid-frequency range (the peak of $\theta_{\mathcal{N}}(\omega)$) could be related to the stability of the base linear system since the system with delay has less phase margin. However, despite observing this behavior (a more stable base linear system leads to a more robust $\theta_{\mathcal{N}}(\omega)$) in every example, a mathematical proof is still needed, which could be part of future studies.


\begin{table*}[]
\centering
\caption{Controller parameters.}
\label{tab: parameters}
\resizebox{\textwidth}{!}{%
\begin{tabular}{|c|c|c|c|c|c|c|c|c|}
\hline
          & $\gamma$ &$D_r$& $\omega_r$           &$\omega_k$& $\omega_i$ & $\omega_d$ & $\omega_t$ & $k_p$    \\ \hline
MSD       & 0        &0& 42.66                &42.66& 38.71      & 50         & 450       & 6.5     \\ \hline
Y$\Theta$ & 0        &0& 67.5$\times 10^{-4}$ &67.5$\times 10^{-4}$& 61.25$\times 10^{-4}$      & 79.167$\times 10^{-4}$     & 356.25$\times 10^{-4}$     & 3518300 \\ \hline
\end{tabular}
}
\end{table*}
\begin{figure}[!t]
\centering
\includegraphics[width=0.9\columnwidth]{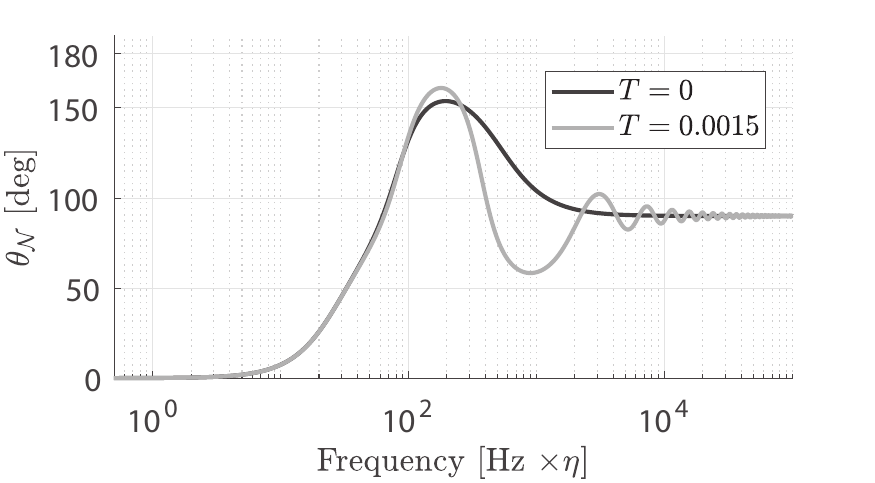}	\caption{$\theta_{\mathcal{N}}(\omega)$ for the MSD system.}
	\label{fig: theta_N MSD}
\end{figure}
\subsection{Precision positioning wire-bonding machine}
Here, the stability of a general reset control system is assessed, where the plant under control is the precision positioning stage of ASMPT's wire-bonding machine. It is assumed that the only information about the plant is its measured FRF, which is depicted in Fig. \ref{fig: FRF}. To maintain confidentiality, adjustments have been made to the frequency axis using an arbitrary constant $\eta$, while excluding both magnitude and phase information.

The control parameters for this case are presented in Table \ref{tab: parameters}. These parameters are obtained from \cite{Yixuan} based on an automated tuning algorithm that aims to maximize the open-loop bandwidth while considering the closed-loop performance.

\begin{figure}
	\centering
 	\includegraphics[width=0.9\columnwidth]{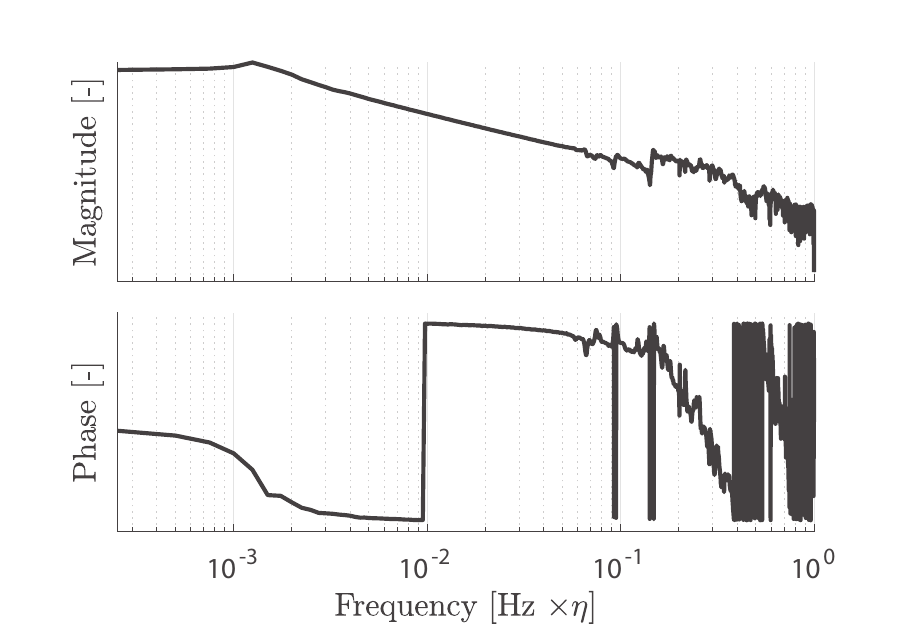}
	\caption{Frequency response data from the x-axis motion platform of the ASMPT's wire-bonding machine.}
	\label{fig: FRF}
\end{figure}

Considering the plant in Fig. \ref{fig: FRF}, controller elements in \eqref{eq:controllers}, and parameters in Table. \ref{tab: parameters}, the Theorem \ref{lem:Hb frequency} is applied to the closed-loop system. The controller is designed to have a stable base linear system and also $-1<\gamma<1$. The phase of the NSV ($\theta_{\mathcal{N}}$), shown in Fig. \ref{fig: theta_N}, reveals that $-\frac{\pi}{2} < \theta_{\mathcal{N}}(\omega) < \pi$ and $(\theta_2 - \theta_1) < \pi$. As a result, conditions three and four in Theorem \ref{lem:Hb frequency} for the GFORE case ($\omega_r \neq 0$) are satisfied. This indicates that the closed-loop system is globally uniformly asymptotically stable for the zero-input case.

Also, since the reset control system has been designed for the zero initial condition and $C_s=1$, by using Lemma \ref{lemma UBIBS} and Lemma \ref{convrgence}, the closed-loop system has the UBIBS property and a uniformly exponentially convergent solution for any input $w$ which is a Bohl function.

This example shows that even without a parametric model of the system or any transfer function, it is still possible to assess the stability of the most general form of a reset control system (including pre-, post-, and parallel filters along reset element in the loop) by using only the measured FRF of the plant.

\begin{figure}
	\centering
 	\includegraphics[width=0.9\columnwidth]{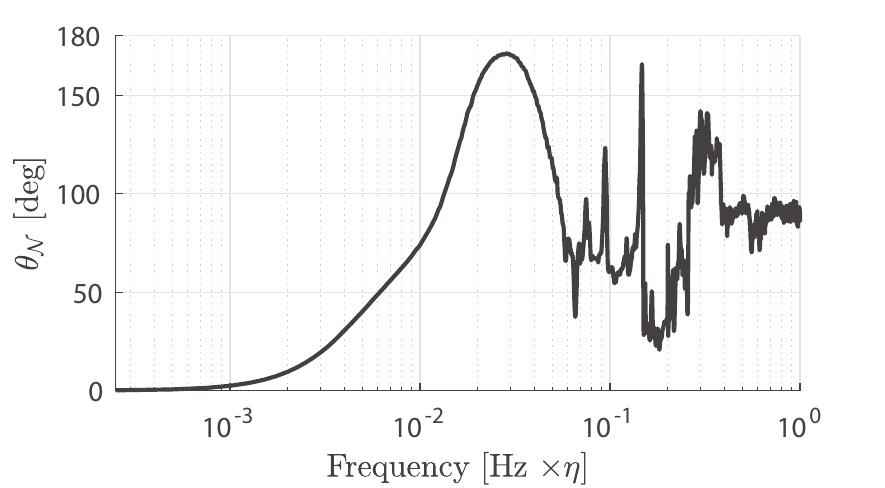}
	\caption{ $\theta_{\mathcal{N}}(\omega)$ for the precision positioning control system.}
	\label{fig: theta_N}
\end{figure}
\section{Conclusion} \label{sec: conclusion}
In this paper, we analytically prove the mapping between the matrix-based $H_\beta$ transfer function and an FRF-based one for general (parallel) RCSs. We then use this mapping to relax the LMI-based conditions and assess quadratic stability using only some graphical, frequency domain-based conditions.
Using the results of this study, we can now assess the stability of a parallel RCS or an RCS with a feed-through term, as well as the UBIBS and convergence properties of the system, achievements that were not possible with existing FRF-based methods. Additionally, we analyzed the effect of delay on this method. The results showed that in cases where the plant suffers from delay, only the FORE/GFORE element results in feasible decisions regarding stability. We validated this finding by considering a mass-spring-damper system in the presence and absence of delay for a controller using GFORE. Furthermore, we used another illustrative example to show the model-free characteristics of this method by considering an FRF of the industrial setup as a plant. This means that no matter which structure of reset control we use or which system we want to control (with or without delay), using this method, we are able to assess its stability. Further research will focus on examining the robustness of the current approach and defining specific stability margins, providing valuable insights into its performance under various operating conditions and uncertainties. Moreover, a comprehensive conservativeness analysis of the existing stability methods for RCSs is required.

\appendices
\section{Proof of Theorem \ref{Th. Theorem1}} \label{App: H_b}
\begin{proof}
 Using the approach from \cite[Proposition 4.5]{banos2012reset}, to demonstrate the quadratic stability of the system in \eqref{eq.SS closed loop}, we show the existence of a matrix $P > 0$ such that the quadratic Lyapunov function \( V(x(t)) = x^\top(t) P x(t) \) decreases over the entire state space along the system trajectories and is non-increasing at the reset jumps (see \cite[Theorem 1]{beker2004fundamental}). This leads to the following:
\begin{equation}
    \label{eq: Lyapunov A}
    \bar{A}^\top P+P\bar{A}<0, 
\end{equation}
and
\begin{equation}
    \label{eq: reset jumps lyapunov}
    x^\top (t)\left(A_\rho^\top P A_\rho -P \right)x(t)\leq 0, \: \forall x(t)\in \mathcal{F}, 
\end{equation}
where the reset surface $\mathcal{F}$ for the zero input case ($w(t)=0$) gives $\{x(t) \in \mathbb{R}^{n_l+1}:\bar{C}_ex(t)=C_e x_l(t)=0 \wedge (I-A_\rho)x(t)\neq0\}$ with $\bar{C}_e=\begin{bmatrix}
    0 & C_e
\end{bmatrix}$.\\
Considering $P=P^\top=\begin{bmatrix}
    P_1 & P_2 \\
    P_2^\top & P_3
\end{bmatrix}$, $A_\rho = \begin{bmatrix}
    \gamma & 0_{1\times n_l}\\
    
    0_{n_l\times 1} & I_{n_l\times n_l}
\end{bmatrix}$, and $x(t)=\begin{bmatrix}
    x_r(t)^T & x_l(t)^T
\end{bmatrix}^T$, for \eqref{eq: reset jumps lyapunov} we have:
\begin{equation}
\begin{split}
    \label{eq: extended lyapunov}
     &x^\top (t)\left(A_\rho^\top P A_\rho -P \right)x(t)\\
     &=(\gamma^2-1)x_r^2(t)P_1+(\gamma-1)x_r(t)x_l^\top(t)P_2^\top\\
     &+(\gamma-1)x_r(t)P_2x_l(t).
     \end{split}
\end{equation}
By selecting $P_2=\beta C_e$ where $\beta \in \mathbb{R}$, we have:
\begin{equation}
\begin{split}
    \label{eq: extended lyapunov 2}
     &x^\top (t)\left(A_\rho^\top P A_\rho -P \right)x(t)\\
     &=(\gamma^2-1)x_r^2(t)P_1+(\gamma-1)x_r(t)\beta x_l^\top(t) C_e^\top\\
     &+(\gamma-1)x_r(t)\beta C_ex_l(t).
     \end{split}
\end{equation}
Given that $C_ex_l(t)=0$ at the reset jumps, the expression simplifies to:
\begin{equation}
    \label{eq: extended lyapunov 3}
     x^\top (t)\left(A_\rho^\top P A_\rho -P \right)x(t)\\
     =(\gamma^2-1)x_r^2(t)P_1,
\end{equation}
where for $P_1>0$ and $(\gamma^2-1)\leq 0$ (or $-1\leq \gamma \leq 1$) it results
\begin{equation}
    \label{eq: reset jumps lyapunov 2}
    x^\top (t)\left(A_\rho^\top P A_\rho -P \right)x(t)\leq 0, \: \forall x(t)\in \mathcal{F}.
\end{equation}
This implies that the inequality in \eqref{eq: reset jumps lyapunov} is satisfied for every \(-1 \leq \gamma \leq 1\) and for matrices \(P > 0\) of the following form:
\begin{equation}
    \label{eq: P matrix}
    B_0^\top P=C_0,
\end{equation}
where
\begin{equation}
B_0= \begin{bmatrix}
    1 \\
    0_{n_l \times 1}
\end{bmatrix}, \quad
 C_0=\begin{bmatrix}
\varrho & \beta C_e
\end{bmatrix}.
\label{eq. C0B0 proof}
\end{equation}
with $\varrho=P_1 \in \mathbb{R}^{+}$. Thus, for \eqref{eq: Lyapunov A} and \eqref{eq: P matrix} we can write:
\begin{equation}
    \label{eq: KYP}
\begin{bmatrix}
    \bar{A}^\top P+P\bar{A}+2\varepsilon P & PB_0-C_0^\top \\
     B_0^\top P-C_0 & 0
\end{bmatrix}\leq0,
\end{equation}
with $\exists\, P>0$ and $\exists \,\varepsilon>0$. The remainder of the proof follows that of \cite[Proposition 4.5]{banos2012reset} (from equation (4.30) onward), utilizing the generalized Kalman–Yakubovich–Popov (KYP) lemma \cite{rantzer1996kalman}.

\end{proof}

\section{Proof of Lemma \ref{lem: Hb transfer}}\label{App: I}
\begin{proof}
In this proof, the goal is to show that the transfer function of $H_\beta(s)$ in (\ref{eq.H beta}) is equal to the transfer function in (\ref{eq: H_b FRF}). By starting from (\ref{eq.H beta}) for $(sI-\bar{A})$ we have
\begin{equation}
\label{eq: sI-A}
(sI - \bar{A}) = \begin{bmatrix}
s-A_r & -B_rC_u \\
-B_uC_r & sI-A-B_uD_rC_u \\
\end{bmatrix}=\begin{bmatrix}
Q_1 & Q_2 \\
Q_3 & Q_4 \\
\end{bmatrix}.
\end{equation}
Defining
\begin{equation}
\label{eq: M-1}
(sI - \bar{A})^{-1} = \begin{bmatrix}
W & X \\
Y & Z
\end{bmatrix},
\end{equation}
and substituting (\ref{eq: M-1}) in (\ref{eq.H beta}), gives
\begin{equation}
H_\beta(s) = \begin{bmatrix}
\varrho & \beta C_e
\end{bmatrix}
\begin{bmatrix}
W & X \\
Y & Z
\end{bmatrix}
\begin{bmatrix}
1 \\
0
\end{bmatrix}
= \varrho W + \beta C_e Y.
\label{eq: Hb WY}
\end{equation}
Therefore, only $W$ and $Y$ are needed to be calculated.
By using Lemma \ref{lem: Mat inv} we have
\begin{equation}
W = \left(Q_1 - Q_2Q_4^{-1}Q_3\right)^{-1},
\label{eq: W}
\end{equation}
and
\begin{equation}
Y = -Q_4^{-1}Q_3W.
\label{eq: WY}
\end{equation}
For $W$, using \eqref{eq: sI-A} gives
\begin{equation}
\begin{split}
W &= \left(Q_1 - Q_2 Q_4^{-1} Q_3\right)^{-1} \\ &= \left(s - A_r - B_r C_u (sI - A - B_u D_r C_u)^{-1} B_u C_r\right)^{-1}.
\end{split}
\label{eq: W expand}
\end{equation}
Now considering $K=Q_1$, $U=Q_2$, $J=-Q_4^{-1}$, and $V=Q_3$, by applying Lemma \ref{lem: Woodbury} we have
 \begin{multline}
W = \left(s - A_r\right)^{-1} - \left(s - A_r\right)^{-1} B_r C_u \\ ...\left(-(sI - A) + B_u D_r C_u + B_u C_r \left(s - A_r\right)^{-1}B_r C_u\right)^{-1}\\...B_u C_r \left(s - A_r\right)^{-1}.
\label{eq: W expand 2}
 \end{multline}
Having $R(s)=C_r(sI-A_r)^{-1}B_r+D_r$, the part
\begin{equation}
\left(-(sI - A) + B_u D_r C_u + B_u C_r \left(s - A_r\right)^{-1}B_r C_u\right)^{-1},
\end{equation}
can be rewritten as
\begin{equation}
\label{dfr}
\begin{split}
&\Big(-(sI - A) + B_u D_r C_u + B_u (R(s)-D_r) C_u\Big)^{-1} \\ &= \Big(-(sI - A) + B_u \left(D_r + R(s) - D_r\right) C_u\Big)^{-1}  \\
&= \left(-(sI - A) + B_u R(s) C_u\right)^{-1}.
\end{split}
\end{equation}
Lemma \ref{lem: Woodbury} can also be applied to \eqref{dfr} as follows
\begin{equation}
\begin{split}
&\left(-(sI - A) + B_u R(s) C_u\right)^{-1} \\
&=-\left(sI - A\right)^{-1} - \left(sI - A\right)^{-1} B_u\\... &\left(\frac{1}{{R(s)}} - C_u \left(sI - A\right)^{-1} B_u\right)^{-1} C_u \left(sI - A\right)^{-1}.
\end{split}
\label{eq: w expand}
\end{equation}
For the part $C_u (sI - A)^{-1} B_u$, from system description in (\ref{eq.SS linear}) when $r=d=0$, we have
\begin{equation}
\begin{split}
C_u (sI - A)^{-1} B_u = \frac{U_1(s)}{U_r(s)} = \frac{-C_1(s) C_2(s) G(s)}{1 + C_1(s) C_2(s) G(s) C_3(s)},
\label{eq: P}
\end{split}
\end{equation}
where capitalized variables $U_1$ and $U_r$ are the Laplace transforms of the respective non-capitalized time-domain signals. For simplicity, $\frac{U_1(s)}{U_r(s)}$ is considered as $P(s)$. Therefore, (\ref{eq: w expand}) can be rewritten as follows
\begin{equation}
\begin{split}
&\left(-(sI - A) + B_u R(s) C_u\right)^{-1}=-(sI - A)^{-1}-\\
&(sI - A)^{-1} B_u \left(\frac{R(s)}{1 - P(s)R(s)}\right)C_u (sI - A)^{-1}.
\label{eq: w expand 2}
\end{split}
\end{equation}
Now by substituting (\ref{eq: w expand 2}) in (\ref{eq: W expand 2}), and considering $(s - A_r)^{-1} = \frac{R(s) - D_r}{C_r B_r}$, it gives
\begin{equation}
\begin{split}
W =& \frac{R(s) - D_r}{C_r B_r} - \frac{R(s) - D_r}{C_r B_r} B_r  \Bigg(-C_u(sI - A)^{-1}B_u\\ &- C_u(sI - A)^{-1} B_u \left(\frac{R(s)}{1 - P(s) R(s)}\right)\\...&C_u(sI - A)^{-1}B_u\Bigg)C_r \frac{R(s) - D_r}{C_r B_r},
\end{split}
\end{equation}
where again by considering $C_u (sI - A)^{-1} B_u=P(s)$, it leads to
\begin{equation}
\begin{split}
W =& \frac{R(s) - D_r}{C_r B_r} - \frac{R(s) - D_r}{C_r} \Bigg(-P(s)\\ &-P(s) \Big(\frac{R(s)}{1 - P(s) R(s)}\Big) P(s)\Bigg) \frac{R(s) - D_r}{B_r} \\
=& \left(\frac{R(s) - D_r}{C_r B_r}\right) \left(\frac{1 - P(s) D_r}{1 - P(s) R(s)}\right).
\end{split}
\label{eq: W main}
\end{equation}
Now $W$ is calculated based on known parameters and transfer functions.

For $Y$ from \eqref{eq: WY} we have 
\begin{equation}
\begin{split}
Y = -Q_4^{-1} Q_3 W = (sI - A - B_u D_r C_u)^{-1} B_u C_r W,
\end{split}
\end{equation}
by applying Lemma \ref{lem: Woodbury}, it gives
\begin{equation}
\begin{split}
Y =& (sI - A)^{-1} B_u C_r W - (sI - A)^{-1} B_u \left(\frac{D_r}{P(s) D_r - 1}\right)\\... &C_u (sI - A)^{-1} B_u C_r W.
\end{split}
\end{equation}
Thus, based on \eqref{eq: Hb WY}, for $\beta C_u Y$ we have
\begin{equation}
\begin{split}
\beta C_e Y =& \beta C_e (sI - A)^{-1} B_u C_r W - \beta C_e (sI - A)^{-1} B_u\\... &\left(\frac{D_r}{P(s) D_r - 1}\right) C_u (sI - A)^{-1} B_u C_r W,
\label{eq: Y expand}
\end{split}
\end{equation}
where again from the state-space description of the LTI part of the system in (\ref{eq.SS linear}), for $C_e (sI - A)^{-1} B_u$ we have
\begin{equation}
\begin{split}
&C_e (sI - A)^{-1} B_u = \frac{E_r(s)}{U_r(s)} \\ &= \frac{-C_1(s) C_2(s) G(s)C_s(s)}{1 + C_1(s) C_2(s) G(s) C_3(s)}=P(s)C_s(s).
\label{eq: PCs}
\end{split}
\end{equation}
Thus, by replacing (\ref{eq: P}) in (\ref{eq: Y expand}), it gives
\begin{equation}
\beta C_e Y = \beta C_r W \left(\frac{-P(s)C_s(s)}{P(s) D_r - 1}\right).\\
\end{equation}
Therefore, $H_\beta(s)$ in \eqref{eq: Hb WY} can be written as
\begin{equation}
\begin{split}
H_\beta(s)=&\varrho W + \beta C_e Y=\beta^{'} (R(s) - D_r)\left(\frac{P(s)C_s(s)}{1 - P(s) R(s)}\right) \\ &+\varrho^{'} (R(s) - D_r)\left(\frac{1 - P(s) D_r}{1 - P(s) R(s)}\right),
\label{eq: Hb semifinal}
\end{split}
\end{equation}
where $\beta^{'}=-\frac{\beta}{B_r}$, and $\varrho^{'}=\frac{\varrho}{B_rC_r}$ (we assume $B_rC_r>0$, which is a relavant assumption for GFORE, CI and PCI elements). By replacing $P(s)=\frac{-L(s)}{1 + L(s) C_3(s)}$ and $L(s)= C_1(s) C_2(s) G(s)$ , it yields
\begin{equation}
    \label{eq: H_b main}
    \resizebox{1\hsize}{!}{
    $H_\beta(s)=\frac{\beta^{'}L(s)C_s(s)\Big(R(s)-D_r\Big)+\varrho^{'}\bigg(1+L(s)\Big(C_3(s)+D_r\Big)\bigg)\Bigl(R(s)-D_r\Bigl)}{1+L(s)\Bigl(R(s)+C_3(s)\Bigl)},$
    }
    \end{equation}
which is equal to the transfer function in \eqref{eq: H_b FRF}.
\end{proof}

\section{proof of Theorem \ref{lem:Hb frequency}}\label{App: II}
\begin{proof}
    According to Theorem \ref{Th. Theorem1} to ensure that the reset control system (\ref{eq.SS closed loop}) is globally uniformly asymptotically stable, it must be shown that the $H_\beta(s)$ is SPR, $(\bar{A},B_0)$ is controllable, $(\bar{A},C_0)$ is observable, and $-1<\gamma<1$. The required conditions for a transfer function ($p\times p$) to be SPR are presented in Lemma \ref{lem: SPR}. Regarding the first condition in Lemma \ref{lem: SPR}, the transfer function \( H_\beta \) must satisfy the Horowitz criterion. Since \( H_\beta \) and the base linear transfer function share the same denominator, it follows from the expression for \( H_\beta \) in \eqref{eq: H_b FRF} that if both the base linear system and the shaping filter \( C_s(s) \) are stable, then the first condition in Lemma \ref{lem: SPR} is satisfied.
    
    Furthermore, because the $H_\beta$ transfer function is a single-input and single-output transfer function, the second and third conditions in Lemma \ref{lem: SPR} can be expressed as steps 1 and 2, respectively. Also, controllability and observability of $(\bar{A},B_0)$ and $(\bar{A},C_0)$ will investigate in step 3.
\begin{itemize}
    \item Step 1: It is shown that there is a $\beta$ and $\varrho>0$ such that $\mathfrak{R}(H_\beta(j\omega)) > 0$ for all $\omega \in \mathbb{R}$.
    \item Step 2: It is shown that either $\lim_{s \to \infty}H_\beta (s) >0$ or $\lim_{s \to \infty}H_\beta (s) =0$ and $\lim_{\omega \to \infty} \omega^2\mathfrak{R}(H_\beta(j\omega))>0$.
    \item Step 3: It is shown that $(\bar{A},B_0)$ and $(\bar{A},C_0)$ are controllable and observable respectively.\\
    \label{th. theorem2}
\end{itemize}

\textbf{Step 1:}
To do this, first, it is necessary to calculate the real part of $H_\beta(j\omega)$ in (\ref{eq: H_b FRF}). By utilizing the notation in (\ref{eq.M}) for $H_\beta(j\omega)$, we can then proceed as follows
\begin{equation}
    \label{eq: nsv p 1}
    H_\beta(j\omega)=\frac{\beta^{'}M_2+\varrho^{'}M_3}{M_1},
\end{equation}
multiplying both the numerator and the denominator by the complex conjugate of the denominator ($M_1^{*}$), yields
\begin{equation}
    \label{eq: nsv p 2}
    H_\beta(j\omega)=\frac{\beta^{'}M_2M_1^{*}+\varrho^{'}M_3M_1^{*}}{M_1M_1^{*}},
\end{equation}
where $\mathfrak{R}(M_1M_1^{*})>0$, and $I(M_1M_1^{*})=0$ ($I(.)$ means the imaginary part). Thus 
\begin{equation}
    \label{eq: nsv p 25}
\mathfrak{R}\Big(H_\beta(j\omega)\Big)=\frac{\beta^{'}\mathfrak{R}\Big(M_2M_1^{*}\Big)+\varrho^{'}\mathfrak{R}\Big(M_3M_1^{*}\Big)}{M_1M_1^{*}}.
\end{equation}
To have $\mathfrak{R}\Big(H_\beta(j\omega)\Big)>0$ it is needed to show that 
\begin{equation}
    \label{eq: nsv p 3}
\beta^{'}\mathfrak{R}\Big(M_2M_1^{*}\Big)+\varrho^{'}\mathfrak{R}\Big(M_3M_1^{*}\Big)>0.
\end{equation}
Considering $\overrightarrow{\mathcal{N}}(\omega)$ as (\ref{eq.NSV}), and defining $\overrightarrow{\xi}=\begin{bmatrix}
    \beta^{'} & \varrho^{'}
\end{bmatrix}$, the equation (\ref{eq: nsv p 3}) can be rewritten as
\begin{equation}
    \label{eq: N xi}
    \overrightarrow{\xi}.\overrightarrow{\mathcal{N}}(\omega)>0, \quad \forall \, \omega \in [0,\infty).
\end{equation}
Having $\theta_{\xi}=\phase{\overrightarrow{\xi}}$, and $\theta_{\mathcal{N}}(\omega)=\phase{\overrightarrow{\mathcal{N}}(\omega)}$, we can write \eqref{eq: N xi} as
\begin{equation}
    \label{eq: N xi 2}
    |\overrightarrow{\xi}||\overrightarrow{\mathcal{N}}(\omega)| \cos{\Big(\theta_{\xi}-\theta_{\mathcal{N}}(\omega)\Big)}>0, \;|\stackrel{\rightarrow}{\xi}|\neq0, |\stackrel{\rightarrow}{\mathcal{N}}|\neq0.\\
\end{equation}
Thus, to have $\overrightarrow{\xi}.\overrightarrow{\mathcal{N}}(\omega)>0$, we should have
\begin{equation}
    \label{eq: N xi 3}
    \cos{\Big(\theta_{\xi}-\theta_{\mathcal{N}}(\omega)\Big)}>0, \quad \text{for all}\quad \omega \in [0,\infty).
\end{equation}
Since $\cos{(x)}>0$ yields $-\frac{\pi}{2}<x<\frac{\pi}{2}$, we should have
\begin{equation}
   -\frac{\pi}{2}<\theta_{\xi}-\theta_{\mathcal{N}}(\omega)<\frac{\pi}{2}, \quad \forall \omega \in [0,\infty).
   \label{eq: zettaN 2}
\end{equation}
Thus, knowing $-\infty<\beta^{'}<\infty$ and $\varrho^{'}>0$, it gives
\begin{equation}
    0<\theta_{\xi}<\pi.
    \label{eq: zetta}
\end{equation}
Let $\theta_1 = \smash{\displaystyle\min_{\forall \omega \in \mathbb{R}}} \theta_{\mathcal{N}}(\omega)$ and $\theta_2 = \smash{\displaystyle\max_{\forall \omega \in \mathbb{R}}} \theta_{\mathcal{N}}(\omega)$ (see Definition \ref{def:Types}). In reference to \eqref{eq: zetta}, it follows that $(\theta_2 - \theta_1) < \pi$ must hold to satisfy \eqref{eq: zettaN 2}.

Furthermore, if $\theta_{\mathcal{N}}(\omega)$ lies in both intervals $[\pi, \frac{3\pi}{2})$ and $[-\frac{\pi}{2}, 0)$, it is evident that no $0 < \theta_{\xi} < \pi$ can satisfy \eqref{eq: zettaN 2}.
\\

Hence, the conditions for $\overrightarrow{\mathcal{N}}(\omega)$ to satisfy (\ref{eq: N xi}) are as follows
\begin{itemize}
    \item $-\frac{\pi}{2}<\theta_{\mathcal{N}}(\omega)<\pi$ and/or $0<\theta_{\mathcal{N}}(\omega)<\frac{3\pi}{2}$, $\quad \forall \, \omega \in [0,\infty)$.
    \item $(\theta_2-\theta_1)<\pi$. \\
\end{itemize}

\textbf{Step 2:} Regarding the $H_\beta$ transfer function in \eqref{eq: H_b FRF}, we have
\begin{equation}
\begin{split}
    \label{ Hb infty}
    \lim_{s \to \infty} H_\beta(s)=\beta^{'}L(s)C_s(s)\big(R(s)-D_r\big)\\+\varrho^{'}\big(R(s)-D_r\big),
    \end{split}
\end{equation}
where either $\lim_{s \to \infty}H_\beta (s) >0$ or $\lim_{s \to \infty}H_\beta (s) =0$ and $\lim_{\omega \to \infty}$ need to be satisfied for both cases with $\omega_r\neq 0$ and $\omega_r=0$.\\

\begin{itemize}
    \item $\omega_r\neq0$ ($R(s)=\frac{\omega_k}{s+\omega_r}+D_r$) \\
    \begin{itemize}
        \item[*] $n - m = 1$ (the relative degree of ${L}(s)C_s(s)$)\\
        \begin{equation}
    \lim_{\omega \to \infty} \omega^2\mathfrak{R}(H_\beta(j\omega))=-\beta^{'}K+\varrho^{'}\omega_r \omega_k=\overrightarrow{\xi} . \overrightarrow{\mathcal{N'}},
\end{equation}
by setting $\overrightarrow{\mathcal{N'}}=\begin{bmatrix}
     -K & \omega_r \omega_k
    \end{bmatrix}^{T}$, we have
    \begin{equation}
        \phase{\overrightarrow{\mathcal{N'}}}=\lim_{\omega \to \infty}\phase{\overrightarrow{\mathcal{N}}(\omega)}\xRightarrow{(\ref{eq: teta})}\theta_1 \leq \phase{\stackrel{\rightarrow}{\mathcal{N'}}} \leq \theta_2,
    \label{eq: N_theta}
    \end{equation}
where, by using step 1, and starting from (\ref{eq: N xi}) for $\overrightarrow{\xi} .  \overrightarrow{\mathcal{N'}}$, it gives
\begin{equation}
    \lim_{\omega \to \infty} \omega^2\mathfrak{R}(H_\beta(j\omega))=\overrightarrow{\xi} . \overrightarrow{\mathcal{N'}}>0.
\end{equation}
\item[*] $n-m> 1$ \\
\begin{equation}
    \lim_{\omega \to \infty} \omega^2\mathfrak{R}(H_\beta(j\omega))=\varrho^{'}\omega_r \omega_k>0.
    \end{equation}
    \end{itemize}
\item $\omega_r=0$ \\
    \begin{itemize}
        \item[*] $n-m>1$ \\
        \begin{equation}
            \lim_{\omega \to \infty} \omega^2\mathfrak{R}(H_\beta(j\omega))=0,
        \end{equation}
        This implies that $H(j\omega)$ is not SPR when $n - m > 1$.
        \item[*] $n - m = 1$ \\
        \begin{equation}
    \lim_{\omega \to \infty} \omega^2\mathfrak{R}(H_\beta(j\omega))=-\beta^{'}\frac{K_{n}}{K_{m}}>0,
    \end{equation}
    \end{itemize}
\end{itemize}
means that in this case, the relative degree can only be 1, and $-\beta^{'}\frac{K_{n}}{K_{m}}>0$. Regarding the transfer function in \eqref{eq: L(s)Cs}, for $n-m=1$ we have $L(\infty)C_s(\infty)=\frac{K_{n}}{K_{m}s}$. Which leads to the following conditions:

    \begin{itemize}
    \item The relative degree of the transfer function $L(s)  C_s(s) = C_1(s)  C_2(s)  G(s) C_s(s)$ must be 1.
    \item If $\lim_{s \to \infty} \operatorname{phase}\left(L(s) C_s(s)\right) = -90$ ($\frac{K_{n}}{K_{m}} > 0$), then $0 < \theta_{\mathcal{N}}(\omega) < \frac{3\pi}{2}$.
    \item If $\lim_{s \to \infty} \operatorname{phase}\left(L(s) C_s(s)\right) = -270$ ($\frac{K_{n}}{K_{m}} < 0$), then $-\frac{\pi}{2} < \theta_{\mathcal{N}}(\omega) < \pi$.
\end{itemize}

\textbf{Step 3:}
In order to show that the pairs $(\bar{A},C_0)$ and $(\bar{A},B_0)$ are observable and controllable, respectively, it is sufficient to show that the denominator and the numerator of $H_\beta (j\omega)$ do not have any common root. Let $a_0 + jb_0$ be a root of the denominator. Then considering the $H_\beta(j\omega)=\frac{\beta^{'}M_2+\varrho^{'}M_3}{M_1}$ in \eqref{eq: nsv p 1}, we should have
\begin{equation}
    M_1(a_0,b_0)=0,
\end{equation}
where for the controllability and observability, the numerator must not have any root at $a_0 + jb_0$, which means
\begin{equation}
   \beta^{'} M_2(a_0,b_0)+\varrho^{'}M_3(a_0,b_0)\neq0.
\end{equation}
Here we assume there is a pair of ($\beta_1 ^{'}, \varrho_1 ^{'}$) such that
 \begin{equation}
 \label{eq: old pair}
    \beta_1 ' M_2(a_0,b_0)+\varrho_1 'M_3(a_0,b_0)=0.
\end{equation}
First, we consider the case that $M_2(a_0,b_0)\neq0$, and/or $M_3(a_0,b_0)\neq0$. From step 1 it can be shown that the eligible pairs for ($\beta^{'}, \varrho^{'}$) always have the following property
\begin{equation}
   \phase{(\beta ^{'}, \varrho ^{'})} \in \begin{bmatrix}
    \theta_2-\frac{\pi}{2}, \theta_1+\frac{\pi}{2}
\end{bmatrix},
\end{equation}
thus,
\begin{equation}
   \phase{(\beta_1 ^{'}, \varrho_1 ^{'})} \in \begin{bmatrix}
    \theta_2-\frac{\pi}{2}, \theta_1+\frac{\pi}{2}
\end{bmatrix}.
\end{equation}
Regarding the condition $(\theta_2-\theta_1)<\pi$ in step 1, it conclude that
\begin{equation}
    \begin{bmatrix}
    \theta_2-\frac{\pi}{2}, \theta_1+\frac{\pi}{2}
\end{bmatrix}\neq \varnothing.
\end{equation}
Therefore, a new pair $(\beta_2^{'}, \varrho_1^{'})=(\beta_1^{'}+\varepsilon, \varrho_1^{'})$ for $\varepsilon>0$ can be found that
\begin{equation}
   \phase{(\beta_2 ^{'}, \varrho_1 ^{'})} \in \begin{bmatrix}
    \theta_2-\frac{\pi}{2}, \theta_1+\frac{\pi}{2}
\end{bmatrix}.
\end{equation}
Substituting the new pairs in (\ref{eq: old pair}) yields,
\begin{align}
\label{eq: neq}
    &\beta_2 ' M_2(a_0,b_0)+\varrho_1 'M_3(a_0,b_0) \\ \nonumber
    &=(\beta_1^{'}+\varepsilon)M_2(a_0,b_0)+\varrho_1 'M_3(a_0,b_0) \\ \nonumber
     &=\beta_1^{'}M_2(a_0,b_0)+\varrho_1 'M_3(a_0,b_0)+\varepsilon M_2(a_0,b_0) \\ \nonumber
      &=\varepsilon M_2(a_0,b_0)\neq0.
\end{align}
Thus, for the case that $M_2(a_0,b_0)\neq0$, and/or $M_3(a_0,b_0)\neq0$, it is possible to find a pair ($\beta^{'}, \varrho^{'}$) such that $H(j\omega)$ be SPR and does not have any pole-zero cancellation.\\

Now considering the case that $M_2(a_0,b_0)=M_3(a_0,b_0)=M_1(a_0,b_0)=0$. Therefore, if we show that when $M_1(a_0,b_0)=0$, always one of the transfer functions $M_2(a_0,b_0)$ or $M_3(a_0,b_0)$ is non zero, then there is not any pole-zero cancellation and the proof is done. Thus, consider
\begin{equation}
M_1(a_0,b_0)=1+L(a_0,b_0)\Big(C_3(a_0,b_0)+R(a_0,b_0)\Big)=0
\label{eq: M1=0}
\end{equation}
\begin{equation}
\begin{split}
M_3(a_0,b_0)=&\bigg(1+L(a_0,b_0)\Big(C_3(a_0,b_0)+D_r\Big)\bigg)\\...&\Bigl(R(a_0,b_0)-D_r\Bigl)=0,
\label{eq: M3=0}
    \end{split}
\end{equation}
where (\ref{eq: M3=0}) yields three cases,
\begin{equation}
    \label{eq: case I}
   \text{I}:
    \begin{cases}
        1+L(a_0,b_0)\Big(C_3(a_0,b_0)+D_r\Big)=0,\\
        R(a_0,b_0)-D_r\neq0,
    \end{cases}
\end{equation}
\begin{equation}
    \label{eq: case II}
    \text{II}:
    \begin{cases}
        1+L(a_0,b_0)\Big(C_3(a_0,b_0)+D_r\Big)\neq0,\\
        R(a_0,b_0)-D_r=0,
    \end{cases}
\end{equation}
\begin{equation}
    \label{eq: case III}
    \text{III}:
    \begin{cases}
        1+L(a_0,b_0)\Big(C_3(a_0,b_0)+D_r\Big)=0,\\
        R(a_0,b_0)-D_r=0.
    \end{cases}
\end{equation}
For the case I, we have $ 1+L(a_0,b_0)\Big(C_3(a_0,b_0)+D_r\Big)=0$, where regarding $M_1$ in \eqref{eq: M1=0}, it yields
\begin{equation}
    R(a_0,b_0)=D_r
\end{equation}
which is not possible to have $R(a_0, b_0) - D_r = 0$ in case I. For cases II and III, $R(a_0, b_0) - D_r = 0$, implying
\begin{equation}
    \frac{\omega_k}{s+\omega_r}+D_r=D_r
\end{equation}
where $ \frac{\omega_k}{s+\omega_r}$ is a strictly proper first-order transfer function and can not be zero. Thus, it is also not possible to have $M_3(a_0, b_0) = 0$ and $M_1(a_0, b_0) = 0$ in these cases. Consequently, it is concluded that there is no pole-zero cancellation, and the pairs $(\bar{A}, C_0)$ and $(\bar{A}, B_0)$ are observable and controllable, respectively. \\

Therefore, regarding the Theorem \ref{Th. Theorem1} and steps 1 and 2, which involve the assessment of the strict positive realness of $H_\beta (s)$, and step 3, which pertains to the controllability and observability of $(\bar{A},B_0)$ and $(\bar{A},C_0)$, the $H_\beta$ condition is satisfied for the reset control system in (\ref{eq.SS closed loop}) by setting $-1<\gamma<1$. Consequently, the zero equilibrium of the reset control system in (\ref{eq.SS closed loop}) achieves global uniform asymptotic stability when $w = 0$. 
\end{proof}
\section{proof of Lemma \ref{lemma: sum_mult Zl}}\label{App: III}
\begin{proof}
    First we show that if $K_1(x_1)$ and $K_2(x_2)$ are $\mathcal{Z}_L$ functions then $K_1(x_1)+ K_2(x_2)$ is also a $\mathcal{Z}_L$ function. Consider $K_1(x_1)=g_1(x_1)h_1(x_1)$ and $K_2(x_2)=g_2(x_2)h_2(x_2)$ where
    \begin{equation}
    h_1(x_1)=\sin{(x_1)},
    \end{equation}
    and
    \begin{equation}
    h_2(x_2)=\sin{(x_2)}.
    \end{equation}
Thus,
\begin{equation}
    K_1(x_1)+ K_2(x_2)=g_1(x_1)\sin{(x_1)}+g_2(x_2)\sin{(x_2)}.
\end{equation}
Using the identity in \cite{SINaSINb}, we have
\begin{equation}
\label{eq k1+k2}
    K_1(x_1)+ K_2(x_2)=\sqrt{Q^2+R^2}\sin{(P+\Phi)},
\end{equation}
where
\begin{equation}
\label{eq: QRP}
    \begin{split}
    Q=&\left(g_1(x_1)+g_2(x_2)\right)\sin{\left(\frac{x_1+x_2}{2}\right)},\\
        R=&\left(g_1(x_1)-g_2(x_2)\right)\cos{\left(\frac{x_1+x_2}{2}\right)},\\
        P=&\frac{x_1-x_2}{2},\\
        \Phi=&\tan^{-1}{\left(\frac{Q}{R}\right)}.
    \end{split}
    \end{equation}
Then from \eqref{eq: QRP}, it can be shown $\lim_{x_1,x_2 \to \infty} \sqrt{Q^2+R^2}=0$. Thus the function $ K_1(x_1)+ K_2(x_2)$ in \eqref{eq k1+k2} is a $\mathcal{Z}_L$ function.\\
For $ K_1(x_1)\times K_2(x_2)=g_1(x_1)h_1(x_1)g_2(x_2)h_2(x_2)$ we can write
\begin{equation}
    K_1(x_1)\times K_2(x_2)=g_1(x_1)g_2(x_2)\sin{(x_1)}\sin{(x_2)},
\end{equation}
where 
\begin{equation}
    \sin{(x_1)}\sin{(x_2)}=\frac{1}{2}\left(\cos{(x_1-x_2)}-\cos{(x_1+x_2)}\right).
\end{equation}
Since every $\cos(y)$ function can be written in $\sin(y+\Phi_y)$ form by a shift in phase, the rest of the proof for $K_1(x_1)\times K_2(x_2)$ is the same as $K_1(x_1)+ K_2(x_2)$. Then, we conclude that if $K_1(x_1)$ and $K_2(x_2)$ are $\mathcal{Z}_L$ functions then $K_1(x_1)+ K_2(x_2)$ and $K_1(x_1)\times K_2(x_2)$ are also $\mathcal{Z}_L$ functions.
\end{proof}

\section*{Acknowledgment}

The authors sincerely appreciate the invaluable collaboration, insightful contributions, and generous support of Luke F. van Eijk and Dragan Kosti\'c from ASMPT throughout this project.
\section*{References}
\bibliographystyle{unsrt}
\bibliography{References.bib}    
\begin{IEEEbiography}[{\includegraphics[width=1in,height=1.25in,clip,keepaspectratio]{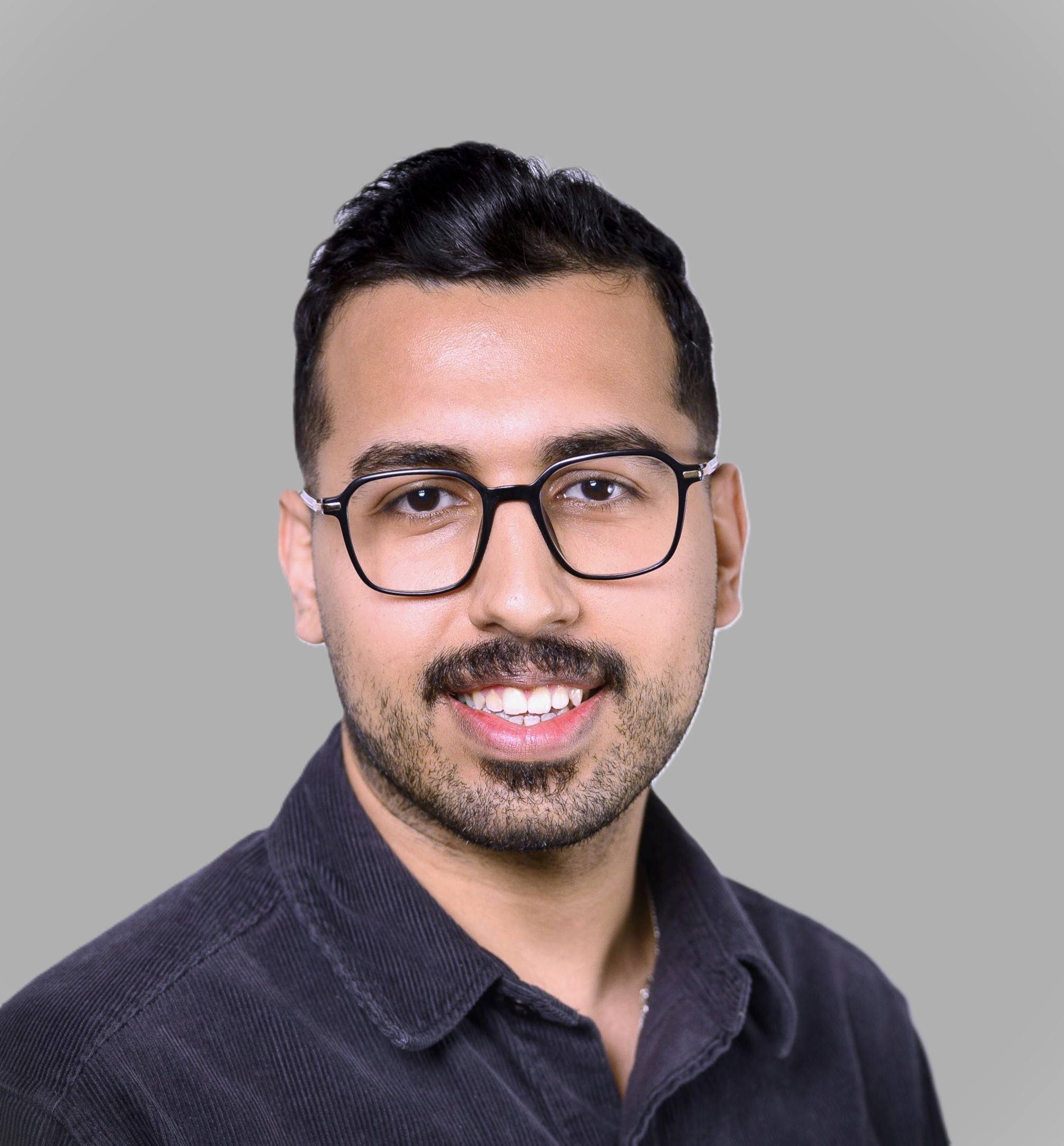}}]{S. Ali Hosseini}
received his M.Sc. degree in Systems and Control Engineering, specializing in nonlinear control (with a focus on hybrid integrator-gain systems), from Sharif University of Technology, Tehran, Iran, in 2022.

He is currently pursuing a Ph.D. in the Department of Precision and Microsystems Engineering at Delft University of Technology, Delft, The Netherlands. His research focuses on addressing industrial control challenges using nonlinear control techniques in close collaboration with ASMPT, Beuningen, The Netherlands. His research interests include precision motion control, nonlinear control systems (such as reset and hybrid systems), and mechatronic system design.
\end{IEEEbiography}
\begin{IEEEbiography}[{\includegraphics[width=1in,height=1.25in,clip,keepaspectratio]{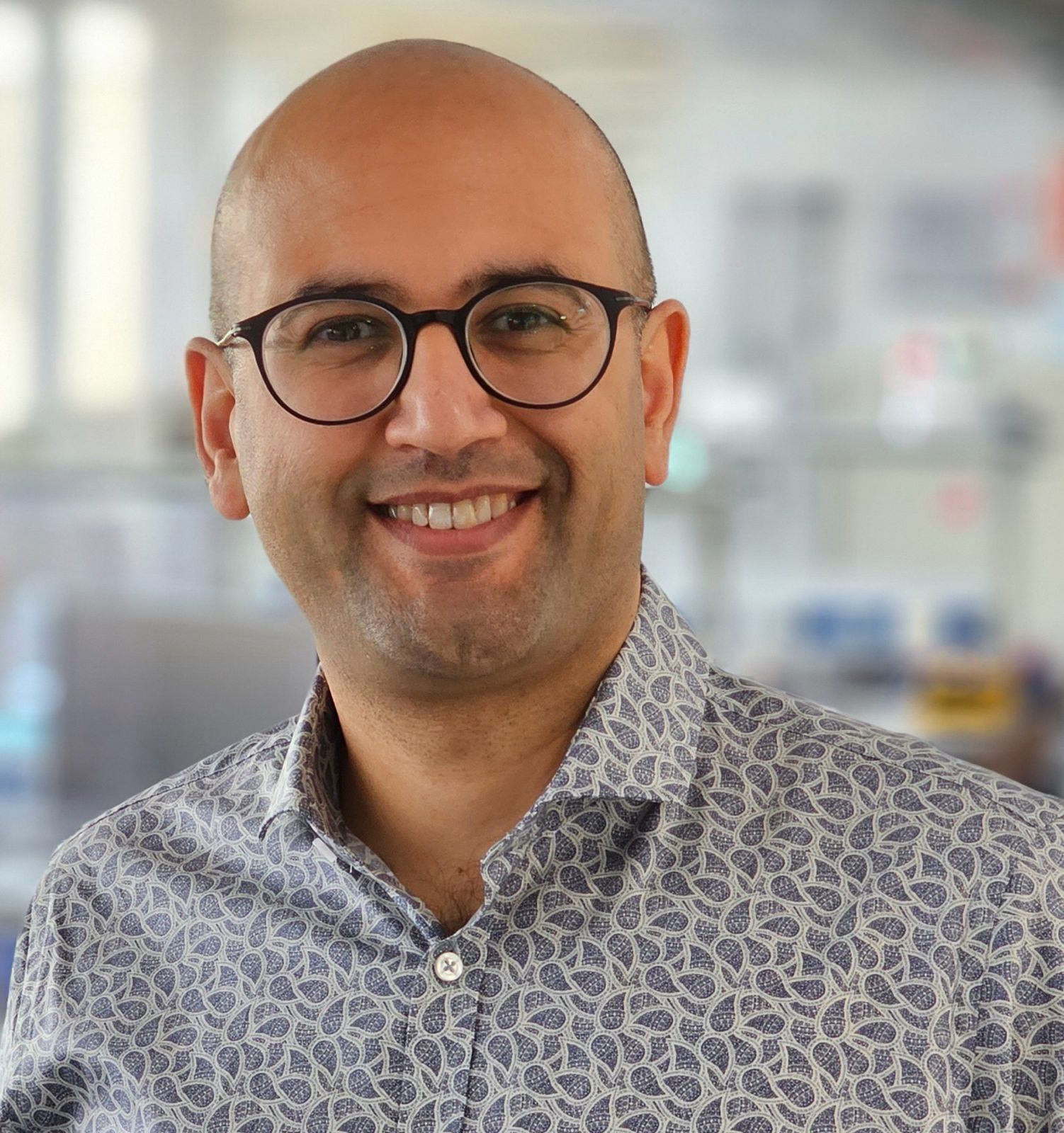}}]{S. Hassan HosseinNia} a (Senior Member, IEEE) received the Ph.D. degree (Hons.) (cum laude) in
electrical engineering specializing in automatic control: application in mechatronics from the University of Extremadura, Badajoz, Spain, in 2013. He
has an industrial background, having worked with
ABB, Sweden. Since October 2014, he has been
appointed as a Faculty Member with the Department
of Precision and Microsystems Engineering, Delft
University of Technology, Delft, The Netherlands.
He has co-authored numerous articles in respected
journals, conference proceedings, and book chapters. His main research interests include precision mechatronic system design, precision motion control,
and mechatronic systems with distributed actuation and sensing.
Dr. HosseinNia served as the General Chair of the 7th IEEE International
Conference on Control, Mechatronics, and Automation (ICCMA 2019).
Currently, he is an editorial board member of “Fractional Calculus and Applied
Analysis,” “Frontiers in Control Engineering,” and “International Journal of
Advanced Robotic Systems (SAGE).”
\end{IEEEbiography}



\end{document}